\documentclass[
 twocolumn,prl,
 amsmath,amssymb,
 aps
]{revtex4-2}

\usepackage{graphicx}
\usepackage{hyperref}
\hypersetup{pdfborder={0 0 0},colorlinks,breaklinks=true,
 urlcolor={purple},citecolor={purple},linkcolor={purple} }

\usepackage{dcolumn}
\usepackage{bm}


\usepackage{mathtools}
\usepackage{physics}
\usepackage{bbm}
\usepackage{mathrsfs}
\usepackage{amsthm}
\usepackage{xcolor}
\usepackage{empheq}
\usepackage{cases}
\usepackage{float}

\newcommand{\red}[1]{\textcolor{black}{#1}}
\def\Z{\mathbb{Z}}

\def\C{\mathbb{C}}

\def\cA{\mathcal{A}}
\def\cB{\mathcal{B}}

\def\cD{\mathcal{D}}
\def\cN{\mathcal{N}}
\def\cM{\mathcal{M}}
\def\cI{\mathcal{I}}
\def\cS{\mathcal{S}}

\def\se{\subseteq}
\def\ss{\subset}

\newtheoremstyle{qca}
{3pt}
{3pt}
{\itshape}
{3pt}
{\itshape}
{.}
{3pt}
{}%
\theoremstyle{qca}
\newtheorem{theorem}{Theorem}

\newtheorem{lemma}{Lemma}

\newtheorem{definition}{Definition}

\relpenalty=9999     
\binoppenalty=9999   %

\begin{document}


\title{Classification of qubit cellular automata on hypercubic lattices}

\author{Andrea Pizzamiglio }
\email{andrea.pizzamiglio01@universitadipavia.it}
\affiliation{QUIT group, Dipartimento di Fisica, Universit\`a degli studi di Pavia, and INFN sezione di Pavia, via A. Bassi 6, 27100 Pavia, Italy}
\author{Alessandro Bisio  }
\email{alessandro.bisio@unipv.it}
\affiliation{QUIT group, Dipartimento di Fisica, Universit\`a degli studi di Pavia, and INFN sezione di Pavia, via A. Bassi 6, 27100 Pavia, Italy}
\author{Paolo Perinotti }%
 \email{paolo.perinotti@unipv.it}
\affiliation{QUIT group, Dipartimento di Fisica, Universit\`a degli studi di Pavia, and INFN sezione di Pavia, via A. Bassi 6, 27100 Pavia, Italy}



\begin{abstract}
We classify \red{Quantum Cellular Automata whose cells are qubits}, on hypercubic lattices $\mathbb Z^s$, with the von Neumann neighborhood scheme, in terms of \red{realizability} as finite-depth quantum circuits. We show the most general structure of such \red{automata} and use its characterisation to simulate a few steps of evolution and evaluate the rate of entanglement production between one cell and its surroundings.
\end{abstract}

\maketitle


Quantum Cellular Automata (QCA) represent the most general time-discrete, unitary and local dynamics of a lattice of quantum systems~\cite{schumacher2004reversible,Farrelly2020reviewofquantum,arrighi2019overview}. Local means that the state of a cell at time $t+1$ depends only on a finite neighborhood of the cell at time $t$, ensuring a bounded information propagation speed across the lattice. In the present Letter, we classify translation-invariant nearest-neighbor QCA of $s$-dimensional hypercubic lattices $\mathbb{Z}^s$ \red{with qubit cells}. The classification includes a \red{general realization scheme for those QCA corresponding} to Finite-Depth Quantum Circuits (FDQCs). QCA epitomize the simplest scenario for many-body physics toy models and effective and reliable forthcoming quantum devices. Moreover, the classification provides us with a sandbox for developing techniques such as renormalization~\cite{Trezzini2025renormalisationof,rotundo2024effectivedynamicsminimisingdissipation}, discrete-time perturbation theory~\cite{PhysRevLett.126.250503}, statistical mechanics without Hamiltonians, that will provide a toolkit for discrete space-time quantum field theory models~\cite{BISIO2015244,PhysRevA.88.032301,PhysRevResearch.6.033136} beyond 1+1 dimensions~\cite{PhysRevA.90.062106,DAriano:2016aa,PhysRevA.96.062101,Perinotti_2020,Arrighi_2014,Eon2023relativistic}. \red{Specializing to nearest-neighbor qubit QCA on $\mathbb Z^s$, the limited subalgebras structure of single-cell operator algebra makes the analysis of algebraic conditions for the update rules of a QCA amenable to characterization by exhaustion.} The case $s=1$ was treated in Ref.~\cite{schumacher2004reversible}. The above classification fully characterizes the subclass of Clifford QCA, mapping Pauli matrices to tensor products of Pauli matrices \cite{haah2021clifford,10.1063/1.3278513,haah2022topological}, thus leaving stabilizer states invariant~\cite{PhysRevA.101.062302}. 

Further, we reconsider the classification in terms of local flow of information. Information behaves like an uncompressible fluid, with a flow constant along the chain of cells in dimension $1+1$. This flow is quantified by the \emph{index} \cite{GNVW}, that completely specifies the local invariants of a QCA in dimension $1$.  
Two QCA have the same index if and only if they can be transformed into each other by concatenation with an FDQC~\cite{GNVW}. In particular, QCA reducing to FDQCs are precisely those with index $1$. 
Although it is natural to expect that this result generalizes to higher spatial dimensions, naive extensions fail \cite{Vogts2009DiscreteTQ, freedman2020classification, haah2023nontrivial,haah2021clifford}. 

\red{In dimension $s$, indices for the flow of information along all Cartesian directions can be defined for nearest-neighbor qubit  QCA, and fully classify the latter modulo FDQCs.  For qudits, conditions that allow for classification cannot be derived in general, thus we enforce them as additional constraints on the local update rule, generalizing the classification to QCA with cells of prime dimension.}

This result marks an instance of an index in dimension~$\geq 3$, and although it pertains a minimal system, it retains considerable interest for applications in quantum computing and the topological phases of matter ~\cite{stephen2019subsystem,po2016chiral,fidkowski2019interacting,kitaev2006topological,haah2022topological}, quantum chaos and scrambling dynamics~\cite{gong2021topological,ranard2022converse}. 

To substantiate the significance of our classification, we leveraged it to simulate various families of QCA and studied their entangling power versus the parameters of their quantum gates.
Quantifying entanglement~\cite{horodecki2009quantum} at each stage of a process represents indeed a significant indication of its complexity.

A classical cellular automaton, such as Conway's game of life, comprises a lattice of cells, each containing a $d$-level system, along with a \emph{homogeneous} and \emph{local} update rule, i.e.~the same function updates the state of any cell $\bm x$ at step $t+1$ according to the state of neighboring cells of $\bm x$ at step $t$.
However, in the quantum case, locality cannot be easily defined due to the no-cloning theorem~\cite{Wootters:1982aa}. Specifically, evaluation of a quantum ``function''
on the neighborhood of $\bm x$, necessarily alters its state, interfering with the evaluation of the state of any cell $\bm y$ whose neighborhood intersects that of $\bm x$. For this reason, in the quantum scenario it is much easier to define the local update avoiding reference to states, i.e.~recurring to the Heisenberg picture. 

In this perspective, 
with each site $\bm{x} \in \mathbb{Z}^s$ we associate the C*-algebra $\mathcal{A}_{\bm{x}}$ of bounded operators on the corresponding Hilbert space~\footnote{We remind the reader that a C*-algebra $\mathcal A$ is an algebra of operators over some complex Hilbert space that is closed under adjoint and complete (every Cauchy sequence in the operator norm converges within the algebra itself), with $\|A^\dag A\|=\|A\|^2$ and $\|AB\|\leq\|A\|\,\|B\|$ for all elements $A,B\in\mathcal A$.}, which is isomorphic to the algebra of complex $d \times d$ matrices $\mathcal{A}_{\bm{x}} \cong \cM_d$.
Let $\left[ \Z^s\right]$ denote the set of \emph{finite} subsets of $\Z^s$. 
The algebra corresponding to a region $\Lambda \in \left[ \Z^s\right]$ with $n$ sites is $\mathcal{A}_{\Lambda} \coloneqq \bigotimes_{\bm{x} \in \Lambda} \mathcal{A}_{\bm{x}} \cong \mathcal{M}_{d^n}$. We denote the trivial subalgebra of $\mathcal A_X$ by $\cI_X \coloneqq \C\, I_{X}$.  For $\Lambda_1 \subset \Lambda_2 \in \left[ \Z^s\right]$, the algebra $\mathcal{A}_{\Lambda_1}$ is isomorphic to the subalgebra $\mathcal{A}_{\Lambda_1} \otimes \cI_{\Lambda_2 \setminus \Lambda_1}$ of $\mathcal{A}_{\Lambda_2}$. The product $O_{\Lambda_1} O_{\Lambda_2}$ of operators $O_{\Lambda_i} \in \cA_{\Lambda_i}$ is thereby a well-defined element of $\cA_{\Lambda_1 \cup \Lambda_2}$. The above construction provides the {\em local algebra}, which is the \emph{inductive limit} of the union of algebras of finite regions~\cite{zbMATH01216133}.
The local algebra can be completed in the uniform operator norm $\|O\|_\infty$ yielding a C*-algebra $\mathcal A_{\mathbb Z^s}$ called \emph{quasi-local} algebra on $\mathbb Z^s$, thus encompassing all operators that are arbitrarily well approximated by sequences of local operators~\cite{doi:10.1142/4090,brattelirobinson}. The system evolves in discrete time steps, by reiterating the QCA, whose definition follows.

\begin{definition}\cite{schumacher2004reversible} A \emph{Quantum Cellular Automaton} on $\mathbb Z^s$ with neighborhood scheme $\mathcal N\in[\mathbb Z^s]$ is an automorphism of C*-algebras $\alpha:\mathcal A_{\mathbb Z^s}\to\mathcal A_{\mathbb Z^s}$ such that
\begin{enumerate}
\item
for $O_{\bm x}\in\mathcal A_{\bm x}$, $\alpha(O_{\bm x})\in\mathcal A_{{\bm x}+\mathcal N}$;
\item $\alpha$ commutes with translations, i.e.~for every $\bm y\in\mathbb Z^s$,
$\tau^{\bm y}\,\alpha=\alpha\,\tau^{\bm y}$.
\end{enumerate}
\end{definition}
\noindent
Note that $\tau^{\bm{x}}: \mathcal{A}_{\mathbb{Z}^s} \to \mathcal{A}_{\mathbb{Z}^s}$ is itself a QCA, that we call {\em shift} (by $\bm{x}$), $\tau^{\bm x}(O_{\bm y})=O_{\bm x+\bm y}$ \footnote{We used the additive notation which is suitable only if translations on the lattice are abelian. This is the case for the present Letter, even though more general situations can be conceived~\cite{ARRIGHI_MARTIEL_NESME_2018,DAriano:2016aa,Perinotti_2020}}.

It can be shown \cite{schumacher2004reversible} that for any $\Lambda \in \left[\Z^s\right]$ there exists a unitary $U_{\Lambda\cup(\Lambda+\cN)} \in \cA_{\Lambda\cup(\Lambda+\cN)}$ so that for any $O_\Lambda \in \cA_{\Lambda}$
\begin{align}
\alpha(O_\Lambda)=U_{\Lambda\cup(\Lambda+\cN)}^\dag\left( O_\Lambda  \otimes I_{\Delta\Lambda } \right)U_{\Lambda\cup(\Lambda+\cN)}\,,
\label{eq:locrul}
\end{align}
 with $\Delta\Lambda \coloneqq \left(\Lambda + \cN\right) \setminus \Lambda $ the boundary of region $\Lambda$.

As shown in Ref.~\cite{schumacher2004reversible}, the \emph{global} update rule $\alpha$ is uniquely determined by the \emph{local} update rule which is defined as the restriction of the former to one site
$ \alpha_0: \mathcal{A}_{\bm{0}}  \to \mathcal{A}_{\bm{0}+\mathcal{N}}\,$.
 Moreover, $\alpha_0$ defines the local update rule of a QCA $\alpha$ if and only if for all $\bm{0} \neq \bm{x} \in \mathbb{Z}^s$ it holds
\begin{equation}\label{iff local rule}
    \left[\alpha(\cA_{\bm{0}}), \alpha(\cA_{\bm{x}})\right]=\left[\alpha_0(\cA_{\bm{0}}),\,\tau^{\bm{x}}(\alpha_0(\cA_{\bm{0}}))\right]=0\,,
\end{equation} 
where $\alpha(\cA_{\bm x})$ is the image of $\cA_{\bm x}$ via $\alpha$ \footnote{We say that two algebras commute if all their operators commute.}.
The above condition is non-trivial for those 
$\bm x$ whose neighborhood overlaps with that of ${\bm0}$.

The non-constructive nature of the QCA definition reflects the absence of a general method for constructing instances of this model.
Underlying quantum computation is the idea of fragmenting a complicated unitary evolution into a series of disjoint quantum gates.
Hence, characterizing QCA executable as a finite sequence of layers of unitary block transformations---a so-called Finite-Depth Quantum Circuit (FDQC)---is highly relevant. A sufficient (but not necessary) condition for this is that the unitary matrices $U_{\bm x \cup(\bm x +\cN)}$ defining the local QCA action as in~\eqref{eq:locrul} pairwise commute~\cite{GNVW}. We stress that FDQCs provide a class of QCA but not all QCA can be realized this way.  The prototypical example of a QCA that is not an FDQC is the shift on $\mathbb Z$ \cite{GNVW}. The following classification provides the FDQC for those QCA that admit one, and identifies those that do not.
\begin{definition}
    Let $\left\{\Lambda_j^l\right\} \ss \left[\Z^s\right] $ with $ l=1,\ldots,L$ be $L$ partitions of the lattice, and let
      $R\in \mathbb{N}_{+}$ such that $  \abs{\Lambda_j^l} \leq~ R \; \forall\, l,j$. A QCA $\alpha$ is a \emph{Finite-Depth Quantum Circuit (FDQC)} of depth $L$ if it can be written as 
     \begin{equation*}
         \alpha=\upsilon_L \cdot \upsilon_{L-1} \cdot  \ldots \cdot  \upsilon_1\,,
     \end{equation*} 
     where each $\upsilon_l$ is a product of disjoint unitary transformations (\emph{gates}) localized on the sets of the $l^{th}$ lattice partition 
     \begin{equation*}
         \upsilon_l(\cdot)=\left( \prod_{ j} U_{\Lambda_j^l}^{\dag}\right)\cdot \left( \prod_{j} U_{\Lambda_j^l}\right)\,.
     \end{equation*}
\end{definition}
\noindent
\red{Notice that we use the acronym FDQC to denote realizations \emph{without ancillary systems}~\footnote{It has been proved that allowing for an ancillary copy of the system \red{one can \emph{simulate} \emph{any} QCA via a finite-depth quantum circuit~\cite{ARRIGHI2011372,arrighi2012intrinsically,arrighi2019overview}. Specifically, given any QCA $\alpha$ on $\cA_{\Z^s}$, the QCA $\alpha \otimes \alpha^{-1}: \cA_{\Z^s} \otimes \cA_{\Z^s} \to \cA_{\Z^s} \otimes \cA_{\Z^s} $, which acts as $ A \otimes B \mapsto \alpha(A) \otimes \alpha^{-1}(B)$, is realizable as an FDQC. This circuit simulates the automaton $\alpha$ on the first copy of the system.}}}.

In the following we assume $\mathcal{N}$ to be the von Neumann neighborhood scheme (FIG.~\ref{fig:neighborhood}) $\mathcal{N}\subseteq\{\bm{0}\} \cup  \{ \pm \bm{e}_i \}_{i=1}^s$, where $\{ \bm{e}_i \}_{i=1}^s$ is the canonical basis of $\Z^s$. 
\begin{figure}[h]
    \centering
    \includegraphics[width=6.2cm]{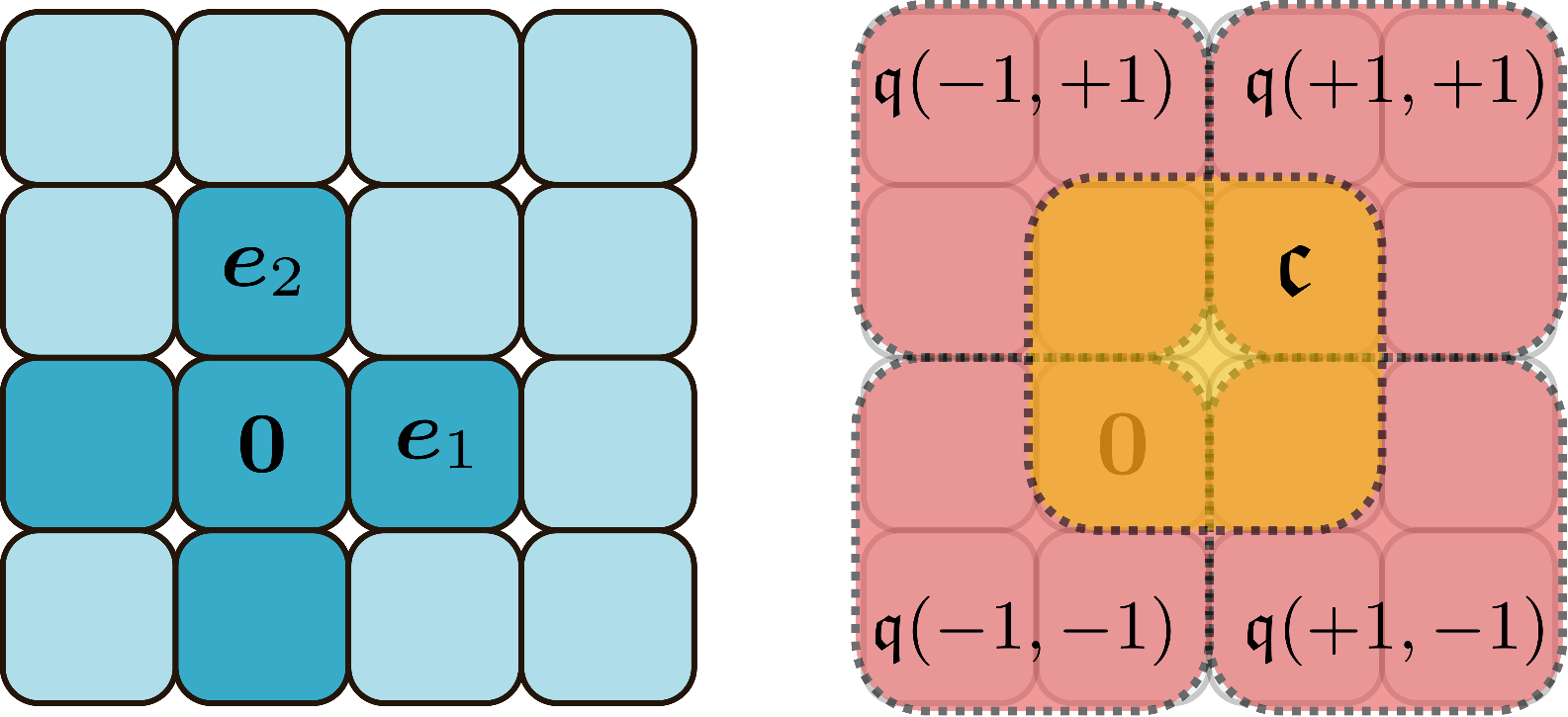}
    \caption{\red{On the left, a darker tone shows the von Neumann neighborhood scheme on $\Z^2$. Cells are labelled by the Cartesian coordinates of their center. On the right are depicted the super-cell $\mathfrak{c}$ and quadrant sets $\mathfrak q(\bm q)$ on $\Z^2$.}
    }
    \label{fig:neighborhood}
\end{figure}
Pauli matrices are denoted by $\bm{\sigma}=(\sigma^1, \sigma^2, \sigma^3)$.
The first result is the following (FIG.~\ref{fig:classqubit}).
\begin{theorem}\label{thm:classific}
The local update rule $\alpha_0 $ of a QCA $\alpha$ with von Neumann neighborhood $\cN$ of a hypercubic lattice $\mathbb{Z}^{s}$ of qubits is given by
 \begin{equation*}
     \alpha_0(O)=U^\dag \left(  V^\dag O V \otimes I_{\Delta\bm{0}}   \right) U \quad \forall\,O \in \cA_{\bm{0}}\,,
 \end{equation*}
 where \red{$V \in \mathsf{SU}(2)$ is a site-wise unitary }
 \begin{equation*}
     V(\bm{\theta})=R_z(\theta_1)R_y(\theta_2) R_z(\theta_3)
 \end{equation*}
with $R_j(\lambda) \coloneqq \exp{-i \lambda\, \sigma^j /2}$, $\lambda\in [0,2\pi]$, while $U\in \mathsf{U}\left(2^{\abs{\cN}}\right)$ can be
 \begin{itemize}
     \item a shift
     \begin{equation*}
      \begin{gathered}
          \tau^{\bm{y}}\lvert_{\cA_{\bm{0} }} =  I_{\Delta\bm{0}\setminus \bm{y}} \otimes S_{\bm{0}, \bm{y}}\,,    
     \end{gathered} 
     \end{equation*}
     $\text{with the swap matrix} \quad S_{\bm{0}, \bm{y}}=\sum_{ab} \ketbra{ab}{ba}_{\bm{0}\bm{y}}$
     \item a multiply controlled-phase
    \begin{equation*}
        \begin{aligned} 
                   &M (\bm{\varphi}) \coloneqq\prod_{i=1}^s C_{\bm 0,\bm e_i}(\varphi_i)\; C_{\bm 0,-\bm e_i}(\varphi_i)\;,\\
                   &C_{\bm x,\bm y}(\varphi)\coloneqq\, \sum_{a  = 0,1 }\; \dyad{a}_{\bm{x}}\otimes  \exp{\frac{i}{2}(\,I-\sigma^3\,)_{\bm{y}}\, a\, \varphi }\,, 
     \end{aligned} 
     \end{equation*}
     $\text{with} \;\; \varphi_i \in [0,2\pi]\,.$
 \end{itemize}
\end{theorem}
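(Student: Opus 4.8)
\emph{Proof strategy.} The plan is to reduce the statement to a rigidity property of a single copy of $\cM_2$ sitting inside the neighbourhood algebra. First I would apply Eq.~\eqref{eq:locrul} with $\Lambda=\{\bm 0\}$ to write $\alpha_0(O)=W^\dag(O\otimes I_{\Delta\bm 0})W$ for some $W\in\cA_{\bm 0\cup\cN}$, unique up to the gauge $W\mapsto(I_{\bm 0}\otimes K_{\Delta\bm 0})W$ with $K$ a unitary on $\Delta\bm 0$; equivalently, setting $P^j\coloneqq\alpha_0(\sigma^j)$, I would study the unital $*$-subalgebra $\cB\coloneqq\alpha_0(\cA_{\bm 0})\cong\cM_2$ generated by the three Hermitian unitaries $P^j$, which obey the Pauli relations. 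Condition~\eqref{iff local rule} says precisely that $\cB$ commutes with $\tau^{\bm x}(\cB)$ for every $\bm x\neq\bm 0$ with $(\bm x+\cN)\cap\cN\neq\emptyset$, i.e.~for $\bm x\in\{\pm\bm e_i\}\cup\{\pm2\bm e_i\}\cup\{\pm\bm e_i\pm\bm e_j\}_{i\neq j}$. To each site $\bm x\in\cN$ I would then attach its support algebra $\cS_{\bm x}(\cB)$, the smallest subalgebra $\cC\se\cA_{\bm x}$ with $\cB\se\cC\otimes\cA_{\cN\setminus\bm x}$; since $\cA_{\bm x}\cong\cM_2$, each $\cS_{\bm x}(\cB)$ is $\C I$, a maximal abelian subalgebra, or all of $\cM_2$, and since $\cB\se\bigotimes_{\bm x}\cS_{\bm x}(\cB)$ is noncommutative, at least one site must carry full support.

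The core of the proof is a case split according to whether $\cS_{\bm 0}(\cB)=\cA_{\bm 0}$. If $\cS_{\bm 0}(\cB)\neq\cA_{\bm 0}$, full support must sit on some neighbour $\bm y\in\{\pm\bm e_i\}$; I would then show, by confronting $\cB$ with $\tau^{\bm y}(\cB)$ and with $\tau^{\pm2\bm e_i}(\cB)$, that this neighbour is unique and that all remaining supports, including $\bm 0$'s, collapse to $\C I$, so that $\cB=\cA_{\bm y}$. This is the shift alternative: $W$ is then a swap $S_{\{\bm 0,\bm y\}}$ up to single-site unitaries, those on $\Delta\bm 0$ removed by the gauge $K$ and the one at $\bm 0$ retained as $V\in\mathsf U(2)$. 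If instead $\cS_{\bm 0}(\cB)=\cA_{\bm 0}$, I would first rule out any neighbour also carrying full support, so that every neighbour support is abelian and $\cB\se\cA_{\bm 0}\otimes\bigotimes_{\bm y\in\Delta\bm 0}\cS_{\bm y}(\cB)$. Writing $P^j=\sum_{\bm a}M^j_{\bm a}\otimes\dyad{\bm a}$ over the joint spectrum $\bm a$ of the abelian neighbour supports, the Pauli relations force each triple $(M^1_{\bm a},M^2_{\bm a},M^3_{\bm a})$ to be an orthonormal frame, so $\cB$ is a bundle of $\cM_2$ factors parametrised by a map from the neighbour bit strings into $SO(3)$. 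Imposing abelianness of each single-neighbour support together with commutation with the overlapping translates then forces this map to be a rotation about one fixed axis by an angle linear in the bits, with equal coefficients $\varphi_i$ for the two neighbours along each Cartesian axis: this is precisely $M(\bm\varphi)$, with $V$ simultaneously rotating that axis onto $\hat z$ and carrying the residual site-$\bm 0$ freedom. In both cases the rotation $V$ is finally extracted by absorbing into $\alpha_0$ the legitimate local rule $O\mapsto V^\dag OV$ at site $\bm 0$.

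The hard part will be the two rigidity steps, i.e.~pinning down that the only survivors are the shift and the linear controlled phase. A priori one must exclude several ``hybrid'' embeddings that are compatible with the Pauli relations but not with all overlap constraints: one full $\cM_2$ factor decorated with abelian tails such as $\sigma^3_{\bm 0}\otimes\sigma^3_{\bm e_i}$, dressings acting from a neighbour back onto $\bm 0$, an $SO(3)$-valued map with nonlinear or non-axial dependence on the bits, or cross-terms controlled on $\bm 0$ but acting jointly on two or more neighbours. Each of these is eliminated only by imposing \emph{all} the overlap constraints at once, the diagonal ones $\tau^{\pm\bm e_i\pm\bm e_j}(\cB)$ included; those diagonal overlaps are also where one checks that for $s\ge2$ nothing genuinely new arises beyond the essentially one-dimensional~\cite{schumacher2004reversible} behaviour along each axis, as they tie the axes together without adding freedom. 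In practice this amounts to a finite but intricate computation with the Pauli supports of $\cB$ across the $2s+1$ cells of $\cN$, and the real challenge is to organise it so that it closes uniformly in the dimension $s$.
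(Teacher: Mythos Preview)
Your proposal is structurally the same as the paper's: support algebras on the cells of $\cN$, a case split on where the full $\cM_2$ factor sits, and elimination via the overlap commutations~\eqref{iff local rule}. Two tactical differences are worth flagging. First, the paper isolates a preparatory lemma that pins down the support-algebra configuration completely \emph{before} any parametrisation: in the non-shift branch it proves not only that each neighbour support is abelian, but that opposite neighbours $\pm\bm e_i$ carry the same $\cD(\bm n_i)$ and that all the $\bm n_i$ are parallel; the exclusion of the perpendicular-axes possibility is a nontrivial two-site commutator computation, and this is where several of your ``hybrid'' candidates already die. Second, for the endgame in that branch the paper does not run your direct $SO(3)$ computation. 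Instead it freezes all neighbour bits except those along one axis $\pm\bm e_i$, observes that what remains is literally the $s=1$ problem of Ref.~\cite{schumacher2004reversible}, and thereby obtains a one-dimensional controlled-phase form in each direction separately; matching these descriptions across directions and imposing the $\tau^{\pm\bm e_i}$ overlap then collapses to a small linear system in the phase angles. This reduction to the known one-dimensional classification is precisely the uniform-in-$s$ organisation you identify as the hard part, and it bypasses the need to exclude non-axial or nonlinear $SO(3)$ dependence by hand.
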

\begin{figure}[h]
    \centering
    \includegraphics[width=8.5cm]{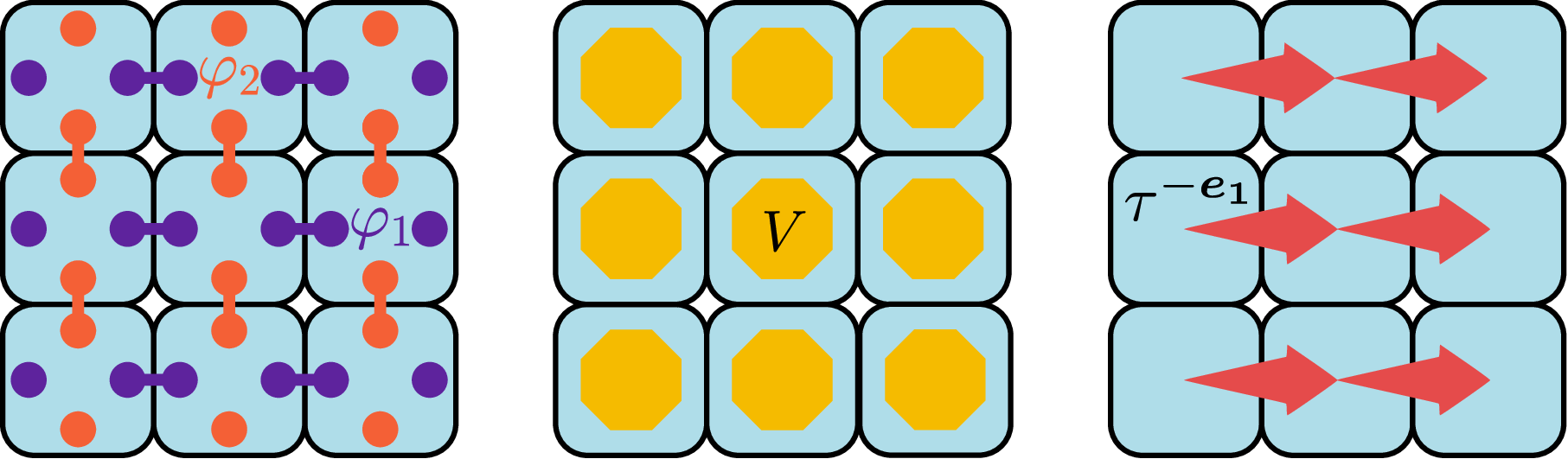}
    \caption{\red{Update rules for qubit QCA with von Neumann neighborhood}. From left to right: multiply controlled-phase, site-wise unitary, and shift QCA on $\Z^2$.}
    \label{fig:classqubit}
\end{figure}
\noindent
Since multiply controlled-phases decompose into commuting two-qubit controlled-phase gates $C_{\bm x,\bm y}(\varphi_i)$, \red{they can be implemented in $2^s$} sequential layers, so that no two-qubit gates in the same layer overlap. Site-wise unitaries can be included in the \red{first} layer. Conversely, as swap gates do not commute, implementing a shift QCA on a closed chain of $N$ qubits as a circuit requires $N-1$ layers, each with a single swap gate, resulting in a circuit depth that scales linearly with system size. 

The primary tools employed for the classification Theorem \ref{thm:classific} are the so-called \emph{support} or \emph{interaction algebras} \cite{GNVW,PhysRevA.63.012301,supp}. \red{Let $\Lambda \supseteq \cN$}, the support algebras of $\alpha(\cA_{\bm x})$ on the regions $\Omega_i$ of a partition $\{ \Omega_i \}$ of $\bm x +\Lambda$  are the smallest C*-subalgebras $\cS^{\bm x}_{\Omega_i}\subseteq\cA_{\Omega_i}\se\cA_{\bm x +\Lambda}$, such that
\begin{equation}\label{def:suppalg}
    \alpha(\cA_{\bm x}) \se \bigotimes_i \cS^{\bm x}_{\Omega_i} \se \cA_{\bm x + \Lambda}\,.
\end{equation}
Support algebras allow for tracking how the image of an algebra distributes within its neighborhood.
If the images of two algebras $\alpha(\cA_{\Lambda_1}),\, \alpha(\cA_{\Lambda_2})$ commute over a region $\Omega$, the respective support algebras on that region will likewise commute $[ \cS^{\Lambda_1}_{\Omega},\, \cS^{\Lambda_2}_{\Omega}]=0$~\cite{GNVW}.  This is the crucial property that, along with translation invariance, enables for the classification of the admissible configurations of support algebras on the neighborhood of a cell.
For instance, for $\Z^2$ we proved that either
 \begin{align}
           &\alpha(\cA_{\bm 0}) \se\; \begin{matrix}
                    & & \cI & &\\
                    & & \otimes & &\\
                    \cI & \otimes & \cI & \otimes & \cM_2\\
                    & & \otimes & &\\
                    & & \cI & &
                \end{matrix}\,,
                \label{eq:supp_shift}
\end{align}
up to permutations of the neighboring sites, or
\begin{align}
                    &\alpha(\cA_{\bm 0}) \se \begin{matrix}
                    & & \cD(\bm{n}_2) & &\\
                    & & \otimes & &\\
                     & \cD(\bm{n}_1) & \otimes\;\;\cM_2 \;\;\otimes&   \cD(\bm{n}_1)\\
                    & & \otimes & &\\
                    & & \cD(\bm{n}_2) & &
                \end{matrix}, \label{eq:supp_phase}
        \end{align}
        where the support algebras on the r.h.s. are represented as if they were located at lattice sites, and $\cD(\bm{n})$ denotes the abelian algebra of jointly diagonalizable matrices
\begin{equation*}
    \cD(\bm{n}) \coloneqq \left\{ a\,I+b\,\bm{n} \cdot \bm{\sigma}  \; \mid \;  a,b \in \C \;;\; \bm{n} \in \mathbb{R}^3\,,\; \norm{\bm{n}}_2=1 \right\}\,.
\end{equation*} 
This result \red{generalizes to $\mathbb{Z}^s$ \cite{supp}.}
  Leveraging this preliminary classification we determine all the unitary matrices realizing a legitimate local update rule.
Here we provide an outline of the proof of Theorem~\ref{thm:classific} as follows \cite{supp}. 
 \red{In case \eqref{eq:supp_shift}, the local update rule $\alpha_0$ realizes a site-wise unitary, possibly combined with a shift of the algebra $\cA_{\bm{0}}$ by one site along $\pm \bm{e}_i$.}
 In case \eqref{eq:supp_phase}, modulo a basis change, we can impose the necessary and sufficient commutation relations (\ref{iff local rule}) on the image of any operator $O \in \cA_{\bm{0}}$ and $O'\in\cA_{\bm x}$ for any $\bm x\in\{\pm \bm e_i,\pm(\bm e_i+\bm e_j),\pm(\bm e_i-\bm e_j)\}$. The image of an operator 
 $O \in \cA_{\bm{0}}$ is of the form
\begin{equation*}
     \alpha_0(O)=\sum_{\bm a,\bm b}\dyad{\bm a}_+\otimes                    
                    U^\dag_{\bm a\bm b}\, O\, U_{\bm a\bm b}\otimes\dyad {\bm b}_{-},
\end{equation*}
where $\dyad{\bm c}_{\pm}=\bigotimes_{j=1}^s\dyad{c_j}_{\pm{\bm e}_j}$.
For fixed values of $a_i,b_i$ in all coordinate directions except one, the local rule must obey the same constraints as that of a QCA on $\mathbb Z$, which, according to Ref.~\cite{schumacher2004reversible}, is given by a one-dimensional multiply controlled-phase in the directions $\bm{e}_i,-\bm{e}_i$. The careful matching of the conditions in various directions leads to the thesis. 

Being algebra automorphisms,  
QCA are fully specified by their action on a set of generators of the qubit algebra \footnote{An algebra consists of all finite linear combinations over $\C$ of finite products of its generators.}, e.g. $\{\sigma^1, \sigma^2\}$. For the subclass of Clifford QCA, 
up to site-wise unitaries by multiples of $\pi/2$  and shifts, the local rule is given by multiply controlled-phases with angles $\varphi_i \in \{0,\pi\}$, that is, for $j=1,2\,$
\[ \alpha_0(\sigma^j_{\bm{0}})= \sigma^j_{\bm{0}}\; \otimes\bigotimes_{\bm{x}\in\{ \pm \bm{e}_i \}} (\sigma^3_{\bm{x}})^{\frac{\varphi_i}\pi}.
\]

We now provide a further classification based on the information flow through a boundary. 
Here as well, support algebras are the cornerstone of our analysis.
Let us introduce some notation (FIG.~\ref{fig:neighborhood}) and recall the result for $1+1$ dimension. Let $\mathfrak{c}$ be the \textit{super-cell} 
\begin{align*}
  \mathfrak{c} \coloneqq \{  \bm{x} \in \Z^s \; \mid \;  x_i \in \{ 0,1\} \}
\end{align*}
consisting of $2^s$ sites, $Q$ the set of $2^s$ \textit{quadrants} $\mathfrak q(\bm q)$
\begin{align*}
Q \coloneqq \left\{ \mathfrak q(\bm q)=\mathfrak{c} + \{\bm{q}\}  \; \mid \; q_i \in \{\pm 1\}  \right\}\,,  
\end{align*}
and $d(\Lambda)$ the square root dimension of the matrix algebra $\cM_{d(\Lambda)}$ supported on $\Lambda \in [\Z^s]$. 
Any nearest-neighbor QCA $\alpha$ on a lattice $\Z$ of qudits $\cA_x \cong \cM_{d}$, 
restricts to an isomorphism
\begin{equation}\label{isom 1d}
 \alpha(\cA_{\mathfrak{c}} ) = \bigotimes_{\mathfrak q(q) \in Q} \cS^{\mathfrak{c}}_{\mathfrak q(q)}  \cong \bigotimes_{\mathfrak q(q) \in Q} \cM_{d(\mathfrak q(q))}\,,   
\end{equation}
with $d\left(\mathfrak q(-1)\right)\,d\left(\mathfrak q(+1)\right)=d(\mathfrak{c})$, and its \emph{index}
\[ \iota(\alpha) \coloneqq  \frac{d(\mathfrak q(-1))}{d}\]
uniquely determines a topological invariant that identifies equivalence classes of QCA modulo concatenation with FDQCs~\cite{GNVW}~\footnote{Here the result is stated for translation-invariant QCA, but it can be proved more generally under the hypothesis of locality~\cite{GNVW}, \red{and for multiple $1$ dimensional chains linked together~\cite{freedman2020classification}.}}. A \red{one-dimensional} QCA is an FDQC if and only if its index is $1$, \red{requiring only two layers}; otherwise, it is equivalent to a shift.
The index quantifies the net flow of information through the local systems at a timestep: the index for a right (left) shift is $\iota(\tau^{-e_1})=d$ ($\iota(\tau^{e_1})=d^{-1}\,$). A QCA with null flux has index $1$ like the trivial evolution.

 We can define analogs of the one-dimensional index along each Cartesian direction of $\Z^s$ (FIG.~\ref{fig:index}).
The index of the automaton $\alpha$ is an $s$-dimensional vector $\bm{\iota}(\alpha)$ with components
\begin{equation}\label{index comp}
        \iota_{i}(\alpha) \coloneqq \sqrt[\leftroot{-3}\uproot{3}2^{s-1}]{ \frac{ \prod_{\mathfrak q(\bm q) \in R_{i}} d\left(\mathfrak q(\bm q)\right)} { d^{2^{s-1}} } } 
\end{equation}
where $R_{i} \coloneqq \left\{ \mathfrak q(\bm q) \in Q\mid q_i=-(\bm{e_i})_i \right\} $, and $d=2$ for qubits. For our qubit QCA, identity (\ref{isom 1d}) holds in any spatial dimension, and the classification generalizes that of the \red{one-dimensional case} (Theorem~\ref{th:four}).

\begin{figure}[h]
    \centering
    \includegraphics[width=7.0cm]{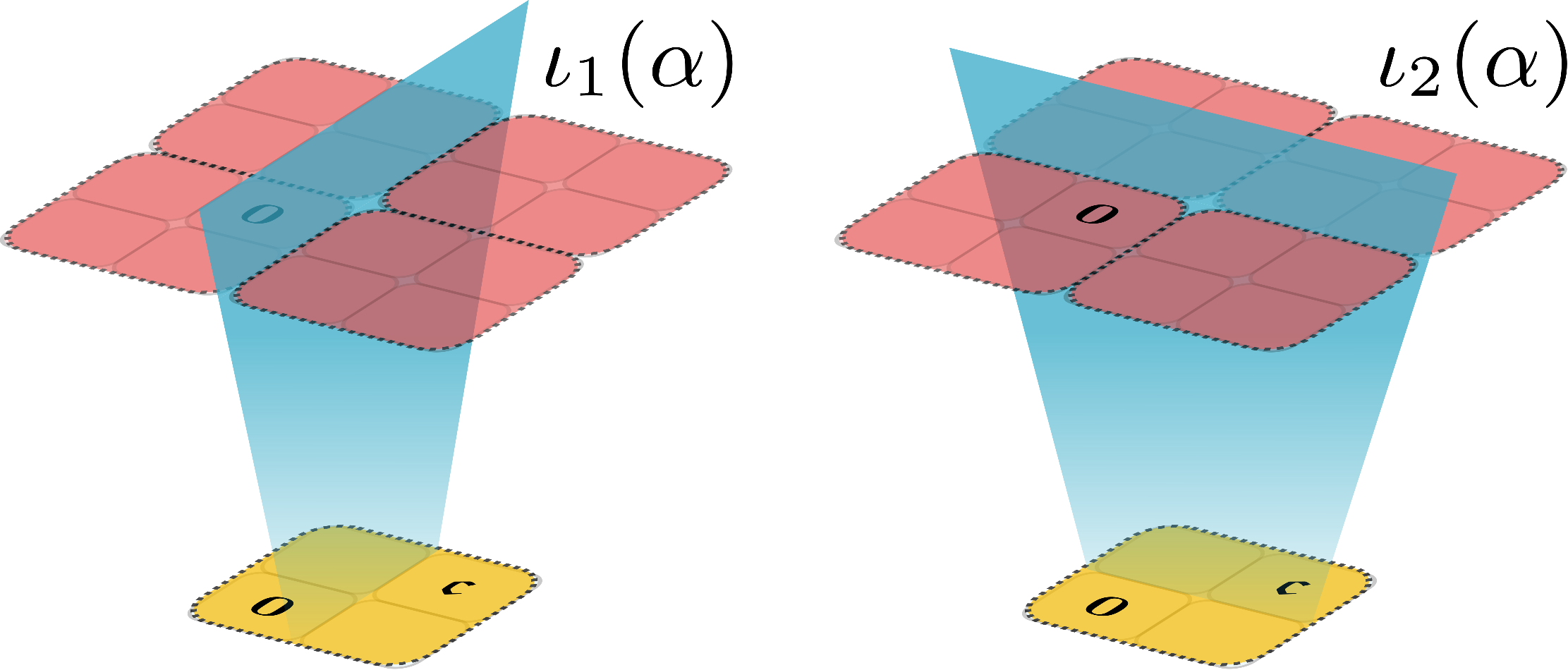}
    \caption{\red{On a $\Z^2$ lattice}, indices $\iota_1(\alpha)$ and $\iota_2(\alpha)$ measure the portion of information stored in the yellow algebra $\cA_{\mathfrak{c}}$ that the update rule $\alpha$ rearranges and transmits to the right or left and up or down sectors of the red algebra $\alpha(\cA_{\mathfrak{c}})$, respectively.}
    \label{fig:index}
\end{figure}

\red{On the contrary, for a generic QCA of $\cA_{\Z^s}$, the indices defined in (\ref{index comp}) do not provide a thorough classification modulo FDQCs as \red{an inclusion (Eq.~\eqref{def:suppalg}) replaces} 
the identity of Eq.~\eqref{isom 1d}.
}
The latter is derived by exploiting the simplicity of \red{$\cA_{\Z}$}~\footnote{An algebra $\cA$ is called simple if its closed two-sided ideals are only $\{0\}$ and $\cA$. A matrix algebra with a trivial center is simple. A quasi-local algebra of simple algebras is simple \cite{brattelirobinson}.}, and the commutativity of support algebras on quadrants, which generally 
fails \red{in higher spatial dimensions}. If we assume it from the outset, i.e.~imposing
    \begin{equation}\label{Vogts hyp}
       \left[ \tau^{-\bm{q}}\,\cS_{\mathfrak q(\bm q)}^{\mathfrak{c}},\, \tau^{-\bm{p}}\, \cS_{\mathfrak q({\bm{p}})}^{\mathfrak{c}} \right]=0\quad \forall\,\mathfrak q(\bm q),\mathfrak q({\bm{p}}) \in Q\,, 
    \end{equation}
we can prove \red{that Eq.~\eqref{isom 1d} holds}
for any QCA of a lattice $\Z^s$ of qudits $\cA_x \cong \cM_{d}$ with a neighborhood enclosed within the Moore scheme $\cN \se \{\bm{0}\} \cup  \{ \pm \bm{e_i} \}_{i=1}^s \cup \{ \bm{e_i} \pm \bm{e_j} \}_{i,j=1}^s \cup \{ -\bm{e_i} \pm \bm{e_j} \}_{i,j=1}^s$,      with $\prod_{\mathfrak q(\bm q)\in Q}d(\mathfrak q(\bm q)) =d(\mathfrak{c})= d^{2^s}  $  \cite{supp}.
\red{Nevertheless, condition (\ref{Vogts hyp}) is only necessary--but not sufficient--to guarantee that the indices will provide a meaningful classification in dimension $s\geq2$}.
Our qubit QCA satisfy condition (\ref{Vogts hyp}) and their indices are shown to classify them strictly. Then, we may introduce a class of QCA that is their natural extension and that can be classified according to indices in the same way, \red{by adding the extra assumption that the dimension of the cell is prime. This leads to} the following Theorem \ref{th:four}, whose proof can be found in the Supplemental Material~\cite{supp}. 
\begin{theorem}\label{th:four}
Let $\mathscr{V}_p^s$ be the sets of von Neumann neighborhood QCA of hypercubic lattices $\Z^s$ with $\cA_{\bm{x}} \cong \cM_p$, $p$ prime, satisfying Equation (\ref{Vogts hyp}). The index $\bm{\iota}$ classifies QCA in $\mathscr{V}_p^s$ modulo FDQCs: 
$\alpha \in \mathscr{V}_p^s$ is an FDQC if and only if $\bm{\iota}(\alpha)=\bm{1}$, while all other QCA are equivalent to shifts.
\red{Any such FDQC can be realized with two layers, acting on lattice partitions defined by even translates of the super-cell and quadrants (FIG.~\ref{fig:neighborhood})}.
\end{theorem}

While some QCA are universal for quanutm computing, i.e.~equivalent to a quantum Turing machine or the circuit model \cite{arrighi2019overview}, this might not be the case for our qubit QCA. However, they generate cluster states efficient as a universal resource (preparator) for Measurement-based Quantum Computation ~\cite{van2007fundamentals,gross2007measurement, PhysRevA.110.062617}. It is then interesting to study how those QCA process entanglement.

We simulated the most general QCA $\alpha^{\bm{\varphi},\bm{\theta}} \in \mathscr V_2^s$  with an FDQC realization, i.e.~site-wise unitary followed by  controlled-phase gates (FIG.~\ref{fig:circuit}). 
\begin{figure}[h]
    \centering
    \includegraphics[width=7.4cm]{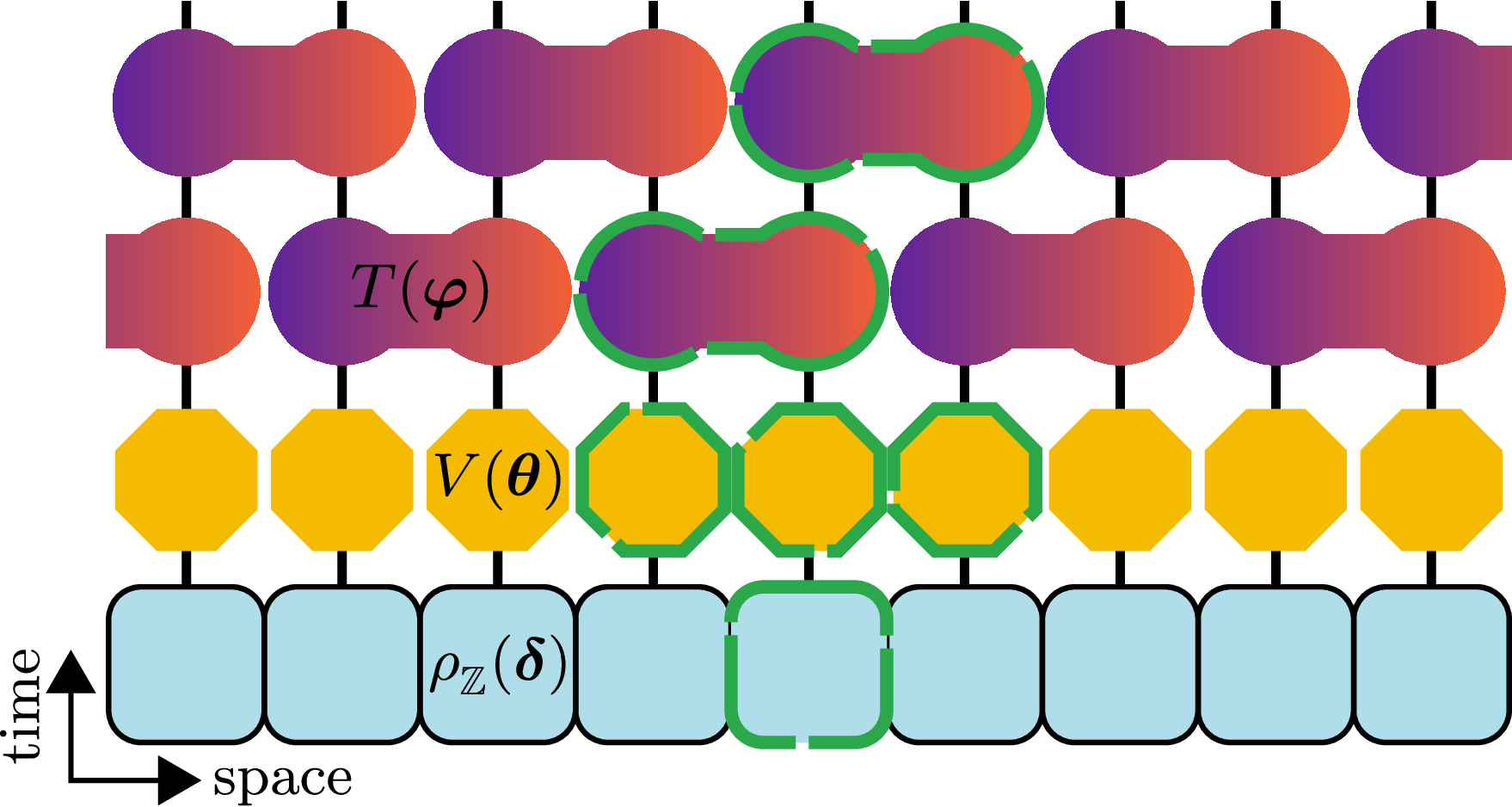}
    \caption{One time-step of the QCA $\alpha^{\bm{\varphi},\bm{\theta}}$ on $\Z^2$. States are depicted as squares, site-wise unitaries as octagons, and tiled controlled-phases $T(\bm \varphi) \coloneqq C_{\bm x, \bm x + \bm e_1}(\varphi_1)\, C_{\bm x + \bm e_2, \bm x + \bm k }(\varphi_1)\, C_{\bm x, \bm x + \bm e_2}(\varphi_2)\, C_{\bm x + \bm e_1, \bm x + \bm k }(\varphi_2)$, with $\bm k \coloneqq \bm e_1 + \bm e_2\,$, are shaded to remind that phases can be different in different lattice directions. Angles $\bm{\varphi} \in [0,2\pi]^s$ and $ \bm{\theta},\bm{\delta} \in [0,2\pi]^3$ are free parameters of the circuit. Transformations within the future causal cone of the system spotted by a dashed frame are highlighted in the same way.}
    \label{fig:circuit}
\end{figure}
To probe entanglement generation, we computed the entanglement entropy of a single qubit versus the parameters of the automaton and the input state. The latter is a pure separable translation-invariant state $\rho(\bm{\delta}) = \bigotimes_{\bm{x}\in\left[ \Z^s\right]} \rho_{\bm{x}}(\bm{\delta})$ with $ \rho_{\bm{x}}(\bm{\delta})=V(\bm{\delta}) \dyad{0}_{\bm{x}} V^\dag(\bm{\delta})$. The entanglement entropy of the qubit at the origin at step $t \in \mathbb{N}$ is defined by 
\begin{equation*}
S\left(\,\rho_{\bm{0}}^t(\bm{\varphi},\bm{\theta},\bm{\delta})\,\right) \coloneqq -\mathrm{Tr}\left(\,\rho_{\bm{0}}^t \,\log_2\,\rho_{\bm{0}}^t\,\right)\,,   
\end{equation*}
where the reduced state $\rho_{\bm x}\in\cM_2$ is obtained imposing
\begin{align*}
\Tr[\rho_{\bm x}\, O]=\Tr[\rho\, O_{\bm x}]\,, \quad \forall O_{\bm x}\in\mathcal A_{\bm x}\,,
\end{align*}
$O$ being the operator representing $O_{\bm x}$ in the abstract algebra $\cM_2$, and the superscript $t$ reminds that the state is considered at step $t$.
As the number of sites in the causal future of a qubit is exponential versus $t$, classical simulation becomes prohibitive as early as $t \geq 4$ for $s=2$ or $t \geq 3$ for $s=3$. We simulated QCA up to $3$ and $2$ time-steps on $\Z^2$ and $\Z^3$ lattices, respectively \footnote{The algorithm was developed using the Qiskit library (version 0.45.3), and its performances match those reported in \cite{suzuki2021qulacs}. When employing JAX and Pennylane libraries (versions 0.4.24 and 0.34.0), we encounter a significant overhead as the number of qubits and parameters grows, resulting in a less efficient implementation.}. We computed the following functions, quantifying entanglement in terms of the QCA parameters: 
\begin{align}
    &S^{\varphi_1,\theta_2}_{\mathrm{max}} \coloneqq \max_{\theta_1,\bm{\delta}} S\left(\,\rho_{\bm{0}}^t(\bm{\varphi},\theta_1,\theta_2,\bm{\delta})\,\right)\,, \nonumber\\ 
    &\Delta S^{\varphi_1,\theta_2} \coloneqq S^{\varphi_1,\theta_2}_{\mathrm{max}} - \min_{\bm{\delta}} S\left(\,\rho_{\bm{0}}^t(\bm{\varphi},\tilde\theta_1,\theta_2,\bm{\delta})\,\right)\,, \label{ds}
\end{align}
with $\tilde\theta_1 \in \mathrm{argmax}_{\theta_1}\left[\max_{\bm{\delta}} S(\,\rho_{\bm{0}}^t(\bm{\varphi},\bm{\theta},\bm{\delta})\,) \lvert_{\theta_1}\right]$, $\varphi_i=\varphi_j$ for all $i,j$, 
and 
\begin{equation*}
    S^{\varphi_1,\varphi_2}_{\mathrm{max}} \coloneqq \max_{\bm{\theta},\bm{\delta}} S\left(\,\rho_{\bm{0}}^t(\bm{\varphi},\bm{\theta},\bm{\delta})\,\right)\,,
\end{equation*}
with $\varphi_3=\varphi_2$ (FIG.~\ref{fig:merged}). 
By manipulating the expressions, one can always discard the parameters $\theta_3, \delta_3$, and for two time steps also $\theta_1$.
The behavior of function (\ref{ds}) remains stable vs $\tilde\theta_1$, with fluctuations that are irrelevant for our purposes. The sample size for input states is $\sim 50$. 

For the vast majority of parameter choices, the QCA can generate (approximately) maximum entanglement. The blue regions in Figure \ref{fig:merged}(b) that are red in Figure \ref{fig:merged}(a), identify QCA where this happens independently of the input state. The extent of the latter regions increases versus number of time steps and lattice dimension. Conversely, the output states are separable for $\varphi_i=2k\pi $, as well as for $(\varphi_i,\,\theta_2)=(2k\pi/t,\,k\pi)$, with $k \in \Z$. The first case corresponds to switching off the controlled-phases (the only entangling gates). In the second case, for $\theta_2=2k\pi$ we are removing site-wise unitaries, with the angles of controlled phases adding up to $2\pi$. When $\theta_2 = (2k+1)\pi $, site-wise unitaries become NOT gates, which just flip the computational basis. 
In the neighborhood of these sets that leave the state factorized, the automata generate entanglement at a polynomial rate.


\begin{figure}[h]
    \centering
    \includegraphics[width=8.6cm]{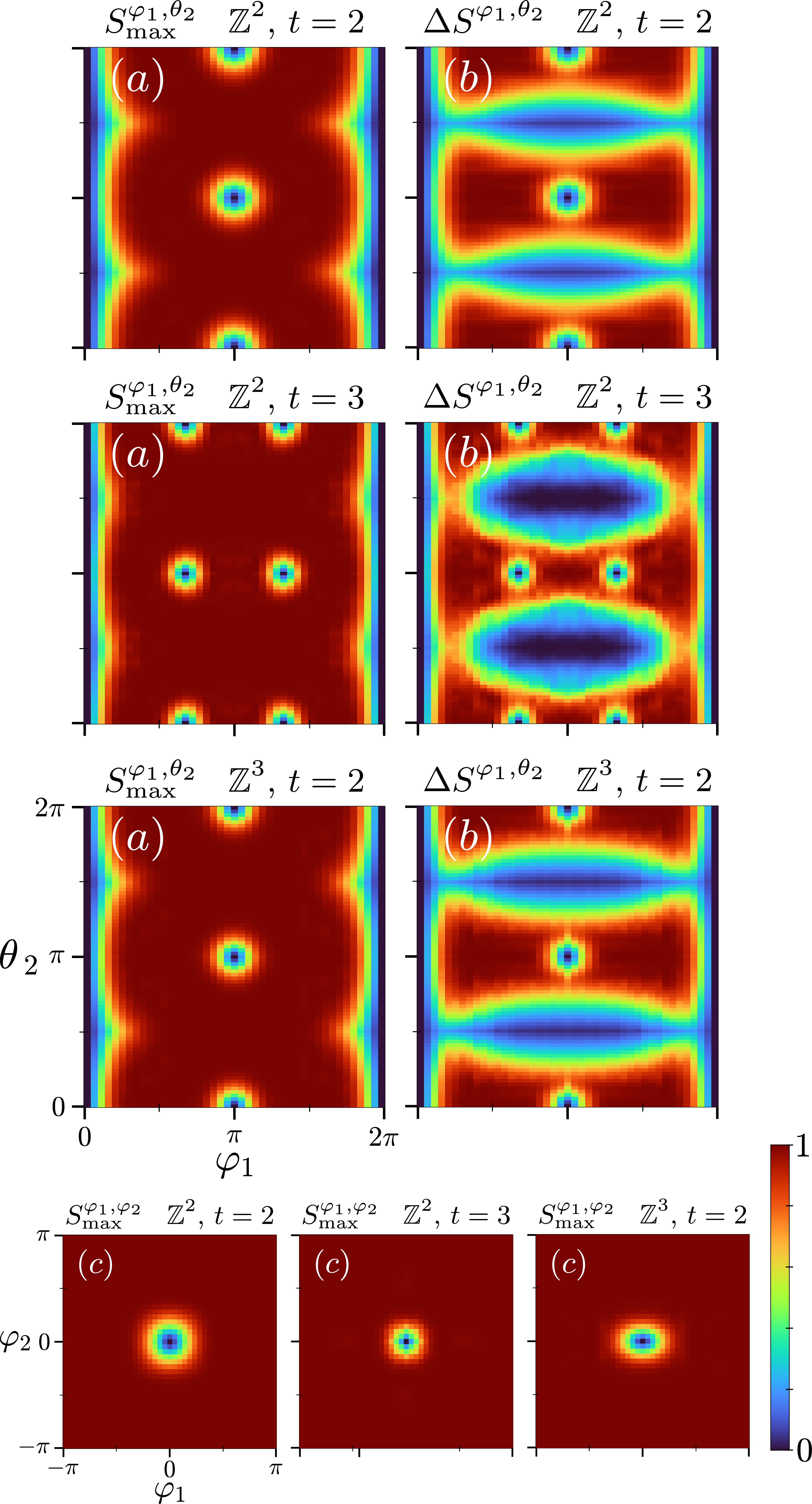}
    \caption{
    Panels (a): maximum entanglement entropy of a qubit against controlled-phase angle $\varphi_1=\varphi_i\; \forall\,i$ and \red{site-wise unitary} angle $\theta_2$. Panels (b) display the difference between the maximum and minimum entanglement entropy of a qubit for the same automata. Panels (c): maximum entanglement entropy of a qubit as a function of controlled-phase angles $\varphi_1$ and $\varphi_2(=\varphi_3)$. Values are optimized over input states. For the case $t=3$, the optimization also runs over the angle $\theta_1$, which is irrelevant for $t=2$. Generated entanglement increases with time and with spatial dimension.}
    \label{fig:merged}
\end{figure}


\acknowledgments \textit{Acknowledgments}—PP and AB acknowledge financial support from the European Union—Next Generation E.U. through the MUR project Progetti di Ricerca
d’Interesse Nazionale (PRIN) QCAPP No. 2022LCEA9Y. AB acknowledges financial support
from the European Union—Next Generation E.U. through
the PNRR MUR Project No. PE0000023-NQSTI. AP
acknowledges scientific discussion with E.~Centofanti, D.~Cugini, F.~Scala, and L.~S.~Trezzini.

\bibliographystyle{apsrev4-1}
\bibliography{qubit_classification}

\appendix
\newpage
\section{SUPPLEMENTAL MATERIAL}
This Supplemental Material provides complete proofs and comprehensive details of the results presented in the body of the manuscript. Section A reviews the essential tools, i.e.~the support algebras, employed in the two subsequent sections. Section B is centered on the proof of the update rules classification Theorem~1. Section C delves into the proof of index-based classification Theorem~2.



For the sake of rigor, some assertions made in the body of the manuscript are presented here in the form of lemmas. In what follows, we will use the following shorthand notation
\begin{align*}
 \xi_{\bm{x}} \coloneqq  \left(\bm{x}+\cN\right) \cap \cN.
\end{align*}

\section{Section A: Preliminary Results}\label{sec:app1}
Let $\alpha: \cA_{\Z^s} \to \cA_{\Z^s}$ be a QCA as per Definition~1.
\begin{lemma}\cite{schumacher2004reversible}
\mbox{}
\begin{itemize} \label{lem:iff_werner}
    \item The \emph{global update rule} $\alpha$ is uniquely determined by the \emph{local update rule} $ \alpha_0: \mathcal{A}_{\bm{0}}  \to \mathcal{A}_{\bm{0}+\mathcal{N}}\,$.
    \item A $*$-homomorphism $ \alpha_0: \mathcal{A}_{\bm{0}}  \to \mathcal{A}_{\bm{0}+\mathcal{N}}\,$ is the local update rule of a QCA $\alpha$ if and only if for all $\bm{0} \neq \bm{x} \in \mathbb{Z}^s$ such that $ \xi_{\bm{x}} \neq \emptyset\,$, the algebras $\alpha(\cA_{\bm{0}})$ and $\tau^{\bm{x}}(\alpha_0(\cA_{\bm{0}}))$ commute elementwise.
\end{itemize}
\end{lemma}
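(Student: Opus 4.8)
The plan is to establish the three assertions separately, handling the first two by bookkeeping and reserving the real effort for the reversibility half of the second. \emph{Uniqueness:} translation covariance already forces $\alpha|_{\cA_{\bm y}}=\tau^{\bm y}\circ\alpha_0\circ\tau^{-\bm y}$ for every $\bm y\in\Z^s$, since for $O_{\bm 0}\in\cA_{\bm 0}$ one has $\alpha(\tau^{\bm y}O_{\bm 0})=\tau^{\bm y}\alpha(O_{\bm 0})=\tau^{\bm y}\alpha_0(O_{\bm 0})$. As each $\cA_\Lambda$, $\Lambda\in[\Z^s]$, is generated as an algebra by the mutually commuting single-site algebras $\cA_{\bm x}$, $\bm x\in\Lambda$, the homomorphism property then pins $\alpha$ down on the local algebra $\bigcup_\Lambda\cA_\Lambda$; and since $*$-homomorphisms of C*-algebras are automatically contractive, density fixes $\alpha$ on all of $\cA_{\Z^s}$.

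\emph{Necessity of the commutation condition:} for $\bm x\neq\bm 0$ the algebras $\cA_{\bm 0}$ and $\cA_{\bm x}$ commute elementwise, hence so do their images under the homomorphism $\alpha$; combining this with $\alpha(\cA_{\bm x})=\tau^{\bm x}(\alpha_0(\cA_{\bm 0}))$ from the previous step yields exactly $[\alpha(\cA_{\bm 0}),\,\tau^{\bm x}(\alpha_0(\cA_{\bm 0}))]=0$. When $\xi_{\bm x}=\emptyset$ the two algebras live on the disjoint regions $\cN$ and $\bm x+\cN$, so the relation is automatic; thus all the content sits in the finitely many $\bm x$ with $\xi_{\bm x}\neq\emptyset$.

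\emph{Sufficiency:} given $\alpha_0$ satisfying the commutation relations, I would first define $\alpha_{\bm x}\coloneqq\tau^{\bm x}\alpha_0\tau^{-\bm x}:\cA_{\bm x}\to\cA_{\bm x+\cN}$. For distinct $\bm x,\bm y$, applying $\tau^{-\bm y}$ turns the desired $[\alpha_{\bm x}(\cA_{\bm x}),\alpha_{\bm y}(\cA_{\bm y})]=0$ into $[\tau^{\bm x-\bm y}(\alpha_0(\cA_{\bm 0})),\alpha_0(\cA_{\bm 0})]=0$, which is the hypothesis when $\xi_{\bm x-\bm y}\neq\emptyset$ and holds by disjointness of supports otherwise. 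Hence, on any $\Lambda\in[\Z^s]$, the $\{\alpha_{\bm x}\}_{\bm x\in\Lambda}$ are unital $*$-homomorphisms into $\cA_{\Lambda+\cN}$ with pairwise commuting ranges, so by the universal property of the tensor product of matrix algebras they assemble into a unique unital $*$-homomorphism $\alpha_\Lambda:\cA_\Lambda\to\cA_{\Lambda+\cN}$; uniqueness makes these compatible under inclusions, so they glue and extend by continuity to a unital $*$-homomorphism $\alpha:\cA_{\Z^s}\to\cA_{\Z^s}$ that is local and translation-covariant by construction. Injectivity is then immediate: $\cA_{\Z^s}$ is simple and $\alpha\neq 0$, so $\ker\alpha=\{0\}$.

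The hard part will be surjectivity, i.e.~showing that this local, covariant $*$-endomorphism is actually an automorphism. My approach would be to upgrade the construction to the local unitary form~\eqref{eq:locrul}: a unital $*$-homomorphism $\cA_{\bm 0}\to\cA_{\cN}$ is necessarily $O\mapsto V^\dagger(O\otimes I)V$ for some unitary on $\cA_{\cN}$ (structure of $*$-homomorphisms of full matrix algebras), and the commutation relations are exactly what allows the overlapping single-site unitaries to be stitched together, one site at a time, into a unitary $U_{\Lambda\cup(\Lambda+\cN)}$ implementing $\alpha$ on each finite $\Lambda$ without conflict. From such an implementation one reads off, on every finite region, a candidate inverse local rule, showing $\alpha^{-1}$ is again a local covariant homomorphism and hence that $\alpha$ is bijective. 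This last, reversibility-type step is the genuinely substantive one, and is precisely the content proved in Ref.~\cite{schumacher2004reversible}, whose argument I would follow rather than reprove.
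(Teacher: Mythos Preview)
The paper does not supply its own proof of this lemma: it is stated in the Supplemental Material with the citation~\cite{schumacher2004reversible} and then used as a black box. Your sketch is correct and is essentially the standard Schumacher--Werner argument you cite at the end: translation covariance plus multiplicativity pin down $\alpha$ on the local algebra and hence on its completion; necessity is immediate from the homomorphism property; sufficiency assembles the translated local rules via the universal property of the tensor product (commuting ranges), with injectivity coming from simplicity of the quasi-local algebra and surjectivity/reversibility deferred to the reference. One small comment on your penultimate paragraph: you do not need to ``stitch'' the single-site unitaries to obtain $U_{\Lambda\cup(\Lambda+\cN)}$; once you have the unital $*$-homomorphism $\alpha_\Lambda:\cA_\Lambda\to\cA_{\Lambda+\cN}$ between full matrix algebras, the unitary implementation~\eqref{eq:locrul} follows directly from the representation theory of $\cM_n$ (all representations of a full matrix algebra are direct sums of the defining one). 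This does not affect the logic, since you correctly identify reversibility as the genuinely nontrivial step and point to~\cite{schumacher2004reversible} for it, just as the paper does.
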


\begin{definition}
    Let  $\Lambda \in \left[ \Z^s\right]$ and $\{ \Omega_i\}$ be a partition of $\Lambda +~ \cN$. Given $\alpha(\cA_{\Lambda}) \se \cA_{\Lambda + \cN} = \bigotimes_i \cA_{\Omega_i} $, the \emph{support algebras} $\cS_{\Omega_i}^{\Lambda}$
    of $\alpha(\cA_{\Lambda})$ on $ \Omega_i$ are the smallest 
    C*-~subalgebras of $\cA_{\Omega_i}$ 
    such that
    \[ \alpha(\cA_{\Lambda}) \se \bigotimes_i \cS_{\Omega_i}^\Lambda   \se \bigotimes_i \cA_{\Omega_i}\,. \]
\end{definition}
\noindent
    In other terms, writing $O \in \alpha(\cA_{\Lambda})$ as $O=~ \sum_{\bm k}\bigotimes_{i} O_{\Omega_i}^{(k_i)}$, where $\bm k\coloneqq(k_1,k_2,\ldots)$ and for fixed $i$ the operators $O^{(k_i)}_{\Omega_i}$ are linearly independent, the support algebra $\cS_{\Omega_i}^\Lambda$ of $\alpha(\cA_{\Lambda})$ on $\Omega_i$ is the C*-algebra generated by all such $O_{\Omega_i}^{(k_i)}$.
\begin{lemma}\cite{schumacher2004reversible, Vogts2009DiscreteTQ} \label{lem:overlaplemma}
\leavevmode
\begin{enumerate}
\item\label{itlem1} Let $\{ \Gamma_j\} $ be a refinement of a partition $ \{ \Omega_i\}  $ of $\Lambda +~\cN$, then  $\bigotimes_i \cS_{\Omega_i}^\Lambda \se \bigotimes_j \cS_{\Gamma_j}^\Lambda$.
\item\label{itlem2} 
If $\{\Omega_1, \Omega_2\}$ is a partition of $\Omega$ and $\cS_{\Omega_1}^\Lambda \cong \cI_{\Omega_1}$, then  $\cS_{\Omega}^\Lambda \cong \cI_{\Omega_1} \otimes \cS_{\Omega_2}^\Lambda $. 
\item\label{itlem3} 
If $\Lambda_1 \cap \Lambda_2 = \emptyset$ and $\Omega=(\Lambda_1 + \cN) \cap (\Lambda_2 + \cN)$, 
then $\left[\cS_{\Omega}^{\Lambda_1},\,\cS_{\Omega}^{\Lambda_2} \right]=0$.

\item\label{itlem4} By translation invariance we have $\cS_{\Omega}^{ \Lambda+\bm{x} } \cong \cS_{\Omega-\bm{x}}^\Lambda$.
 \item\label{itlem5} $\cS_{\Omega}^\Lambda = \bigvee_{\bm{x} \in \Lambda} \cS_{ \Omega}^{\bm{x}}$ where the symbol $\cA\lor\cB$ denotes the smallest algebra generated by $\cA\cup\cB$.
\end{enumerate}
\end{lemma}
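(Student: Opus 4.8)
The five properties all flow from the defining \emph{minimality} of the support algebras together with the fact that $\alpha$ is a $*$-automorphism of the quasi-local algebra; only the commutation property \ref{itlem3} requires genuine work, so I treat it last. For the \textbf{refinement} property \ref{itlem1}, the plan is to re-expand a generic $O\in\alpha(\cA_\Lambda)$ with respect to the finer partition $\{\Gamma_j\}$, so that $O\in\bigotimes_j\cS_{\Gamma_j}^\Lambda$ by the generator characterisation of the support algebra. Grouping the factors belonging to a common coarse block $\Omega_i$ defines a candidate tuple $\tilde{\mathcal S}_{\Omega_i}\coloneqq\bigotimes_{\Gamma_j\se\Omega_i}\cS_{\Gamma_j}^\Lambda\se\cA_{\Omega_i}$ satisfying $\alpha(\cA_\Lambda)\se\bigotimes_i\tilde{\mathcal S}_{\Omega_i}$; minimality of $\cS_{\Omega_i}^\Lambda$ then forces $\cS_{\Omega_i}^\Lambda\se\tilde{\mathcal S}_{\Omega_i}$ for each $i$, and tensoring over $i$ gives the claim. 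Property \ref{itlem2} is then an immediate corollary: applied to the refinement $\{\Omega_1,\Omega_2\}$ of $\Omega$, property \ref{itlem1} yields $\cS_\Omega^\Lambda\se\cS_{\Omega_1}^\Lambda\otimes\cS_{\Omega_2}^\Lambda\cong\cI_{\Omega_1}\otimes\cS_{\Omega_2}^\Lambda$, while conversely triviality of $\cS_{\Omega_1}^\Lambda$ means every element of $\alpha(\cA_\Lambda)$ acts as the identity on $\Omega_1$, so its $\Omega$-part has the form $I_{\Omega_1}\otimes O_{\Omega_2}$ and the slices generating $\cS_\Omega^\Lambda$ generate exactly $\cI_{\Omega_1}\otimes\cS_{\Omega_2}^\Lambda$, giving equality.

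For \textbf{covariance} \ref{itlem4}, I would use translation invariance $\tau^{\bm x}\alpha=\alpha\tau^{\bm x}$ to write $\alpha(\cA_{\Lambda+\bm x})=\tau^{\bm x}\bigl(\alpha(\cA_\Lambda)\bigr)$, and observe that the shift $\tau^{\bm x}$ is a spatial $*$-isomorphism carrying $\cA_Y$ onto $\cA_{Y+\bm x}$ block by block. Hence the minimal tuple for $\alpha(\cA_{\Lambda+\bm x})$ on a partition containing the block $\Omega$ is the image under $\tau^{\bm x}$ of the minimal tuple for $\alpha(\cA_\Lambda)$ on the shifted partition containing $\Omega-\bm x$, i.e. $\cS_\Omega^{\Lambda+\bm x}=\tau^{\bm x}(\cS_{\Omega-\bm x}^\Lambda)\cong\cS_{\Omega-\bm x}^\Lambda$. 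For \textbf{additivity} \ref{itlem5}, monotonicity of support under inclusion (itself a direct consequence of minimality) gives $\cS_\Omega^{\bm x}\se\cS_\Omega^\Lambda$ for each $\bm x\in\Lambda$, hence $\bigvee_{\bm x}\cS_\Omega^{\bm x}\se\cS_\Omega^\Lambda$. For the reverse inclusion I set $\mathcal T\coloneqq\bigvee_{\bm x\in\Lambda}\cS_\Omega^{\bm x}$; since $\alpha$ is a homomorphism, $\alpha(\cA_\Lambda)$ is the C*-algebra generated by $\{\alpha(\cA_{\bm x})\}_{\bm x\in\Lambda}$, and each $\alpha(\cA_{\bm x})\se\cS_\Omega^{\bm x}\otimes\cA_{(\bm x+\cN)\setminus\Omega}\se\mathcal T\otimes\cA_{(\Lambda+\cN)\setminus\Omega}$, so $\alpha(\cA_\Lambda)\se\mathcal T\otimes\cA_{(\Lambda+\cN)\setminus\Omega}$ and minimality yields $\cS_\Omega^\Lambda\se\mathcal T$.

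The main obstacle is the \textbf{commutation} property \ref{itlem3}. The starting point is that $\Lambda_1\cap\Lambda_2=\emptyset$ implies $\cA_{\Lambda_1}$ and $\cA_{\Lambda_2}$ commute, so, $\alpha$ being an automorphism, $[\alpha(\cA_{\Lambda_1}),\alpha(\cA_{\Lambda_2})]=0$. The delicate step is to \emph{descend} this commutation from the full images to their restrictions on the overlap $\Omega=(\Lambda_1+\cN)\cap(\Lambda_2+\cN)$. The plan is a slicing argument: for $B_1\in\alpha(\cA_{\Lambda_1})$ write $B_1=\sum_k P_k\otimes Q_k$ with $P_k\in\cA_\Omega$ and $Q_k\in\cA_{(\Lambda_1+\cN)\setminus\Omega}$ chosen \emph{linearly independent}, and likewise $B_2=\sum_l R_l\otimes T_l$ with $R_l\in\cA_\Omega$, $T_l\in\cA_{(\Lambda_2+\cN)\setminus\Omega}$ and the $T_l$ linearly independent. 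Because the three regions $\Omega$, $(\Lambda_1+\cN)\setminus\Omega$, $(\Lambda_2+\cN)\setminus\Omega$ are pairwise disjoint, all cross terms commute and $0=[B_1,B_2]=\sum_{k,l}[P_k,R_l]\otimes Q_k\otimes T_l$. Since the $Q_k$ and the $T_l$ are linearly independent and supported on disjoint regions, the products $Q_k\otimes T_l$ are linearly independent, so applying suitable dual functionals isolates each coefficient and forces $[P_k,R_l]=0$. As the $\Omega$-parts $P_k$ (resp. $R_l$) arising this way generate $\cS_\Omega^{\Lambda_1}$ (resp. $\cS_\Omega^{\Lambda_2}$), elementwise commutation of generators extends to the generated C*-algebras, giving $[\cS_\Omega^{\Lambda_1},\cS_\Omega^{\Lambda_2}]=0$. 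The one point demanding care is matching the decomposition used in the slicing with the generator characterisation of the support algebras, so that the commuting $P_k,R_l$ are genuinely generating sets.
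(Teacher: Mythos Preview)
Your proof is correct. The paper, however, does not supply its own proof of this lemma: it is stated with citations to Schumacher--Werner and Vogts and then used as a toolbox. Your arguments---minimality for items~\ref{itlem1}, \ref{itlem2}, \ref{itlem5}, covariance under $\tau^{\bm x}$ for item~\ref{itlem4}, and the slicing argument with linearly independent complementary factors for item~\ref{itlem3}---are exactly the standard ones found in those references, so there is nothing to compare beyond noting that you have reconstructed what the paper chose to quote.
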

\section{Section B: Classification of Update Rules }\label{sec:app2}
Let $\alpha: \cA_{\Z^s} \to \cA_{\Z^s}$ be a QCA of qubit systems $\cA_{\bm x} \cong \cM_2$  with a von Neumann neighborhood scheme $\mathcal{N}\subseteq\{\bm{0}\} \cup  \{ \pm \bm{e}_i \}_{i=1}^s$, where $\{ \bm{e}_i \}_{i=1}^s$ is the canonical basis of $\Z^s$.
Let us begin with two overarching observations. First, it is imperative that at least one of the support algebras $\cS_{\bm{x}}^{\bm{0}}\,$, with $\bm x \in \cN$, is non-abelian since $\alpha_0$ is an automorphism of a non-abelian algebra. Second, the non-trivial C*-subalgebras of $\cM_2$ are all isomorphic, and they consist in the abelian algebras of simultaneously diagonalizable matrices
\begin{equation*}
    \cD(\bm{n}) \coloneqq \left\{ a\,I+b\,\bm{n} \cdot \bm{\sigma}  \; \mid \;  a,b \in \C \;;\; \bm{n} \in \mathbb{R}^3\,,\; \norm{\bm{n}}_2=1 \right\}\,,
\end{equation*} 
where $\bm{\sigma}=(\sigma^1, \sigma^2, \sigma^3)$ denote the Pauli matrices.

\begin{lemma}\label{lem:confsuppalg}
      Let $\alpha$ be a QCA with von Neumann neighborhood scheme $\cN$ of a hypercubic lattice $\mathbb{Z}^{s}$ of qubits. Then 
     \begin{enumerate}
         \item[i.] either $ \alpha(\cA_{\bm 0}) \se \cA_{\bm{x}} \otimes \cI_{\cN \setminus \bm{x}}$, with $\bm{x}\in\cN\,$,
         \item[ii.] or $\alpha(\cA_{\bm 0}) \se \cA_{\bm{0}} \otimes \bigotimes_{i=1}^s(\cD(\bm{n}_i)_{-{\bm e}_i}{\otimes}\cD(\bm{n}_i)_{{\bm e}_i})\,$,
     \end{enumerate}
     where the r.h.s is $\bigotimes_{\bm x \in \cN} \cS^{\bm 0}_{\bm x}\,$.
 \end{lemma}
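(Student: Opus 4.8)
The plan is to analyse the possible configurations of the support algebras $\cS^{\bm 0}_{\bm x}$, for $\bm x\in\cN$, under the constraints imposed by Lemma~\ref{lem:iff_werner} (the commutation condition for a well-posed local rule) and by Lemma~\ref{lem:overlaplemma} (monotonicity under refinement and, crucially, items~\ref{itlem3} and~\ref{itlem5}). Since each single-site algebra is $\cM_2$, whose only non-trivial C*-subalgebras are the abelian algebras $\cD(\bm n)$, each $\cS^{\bm 0}_{\bm x}$ is one of $\cI_{\bm x}$, $\cD(\bm n_{\bm x})$, or all of $\cA_{\bm x}$. As noted in the two ``overarching observations'', at least one $\cS^{\bm 0}_{\bm x}$ must be non-abelian, hence equal to $\cA_{\bm x}\cong\cM_2$ for some $\bm x$ --- call this a \emph{full site}. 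The core dichotomy (i.\ vs.\ ii.) will be: either there is a full site at some $\bm x = \pm\bm e_i$ other than $\bm 0$, or the only possible full site is $\bm 0$ itself.

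First I would show there can be at most one full site. Suppose $\cS^{\bm 0}_{\bm x}\cong\cM_2$ for $\bm x\in\cN$. Consider the cell $\bm y$ with $\bm 0\ne\bm y\in\cN$ and $\bm y\ne\bm x$ (e.g.\ $\bm y=-\bm x$ if $\bm x=\pm\bm e_i$, or any $\pm\bm e_j$ if $\bm x=\bm 0$): the site $\bm x$ lies in both $\cN$ and $\bm y+\cN$ provided $\bm x\in\xi_{\bm y}=(\bm y+\cN)\cap\cN$, which holds for the relevant choices since $\|\bm x-\bm y\|_1\le 2$. Translation invariance (Lemma~\ref{lem:overlaplemma}.\ref{itlem4}) identifies $\cS^{\bm y}_{\bm x}$ with $\cS^{\bm 0}_{\bm x-\bm y}$, and the commutation of $\alpha(\cA_{\bm 0})$ with $\tau^{\bm y}(\alpha_0(\cA_{\bm 0}))=\alpha(\cA_{\bm y})$ forces $[\cS^{\bm 0}_{\bm x},\cS^{\bm y}_{\bm x}]=0$ on the overlap site $\bm x$ (Lemma~\ref{lem:overlaplemma}.\ref{itlem3}, applied after translating; see also Eq.~(\ref{iff local rule})). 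Since $\cS^{\bm 0}_{\bm x}$ is the full algebra $\cM_2$, its commutant in $\cM_2$ is trivial, so $\cS^{\bm 0}_{\bm x-\bm y}\cong\cI$, i.e.\ the image $\alpha(\cA_{\bm 0})$ is trivial on site $\bm x-\bm y$. Running this over all valid $\bm y$ pins down which sites must carry the trivial algebra, and combined with Lemma~\ref{lem:overlaplemma}.\ref{itlem2} (absorbing trivial tensor factors) yields that whenever a full site exists, all other support algebras at distance realizable as $\bm x-\bm y$ collapse --- leading directly to case i.\ after checking that the ``axis'' geometry of the von Neumann scheme leaves no room for a second full site, and that $\cM_2$ sitting inside $\cA_{\bm x}$ forces $\alpha(\cA_{\bm 0})\se\cA_{\bm x}\otimes\cI_{\cN\setminus\bm x}$ (a dimension/simplicity count: the image is a copy of $\cM_2$ and it already fills $\cA_{\bm x}$).

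If no full site exists, then every $\cS^{\bm 0}_{\bm x}$ is abelian, isomorphic to $\cI$ or $\cD(\bm n_{\bm x})$. Here the obstruction is that $\alpha(\cA_{\bm 0})$ is a non-abelian algebra embedded into $\bigotimes_{\bm x}\cS^{\bm 0}_{\bm x}$; if this ambient algebra were abelian we would have a contradiction, so we need at least two distinct $\cD$-factors with non-commuting $\bm n$'s --- but wait, that still cannot host $\cM_2$ unless the image genuinely entangles the factors. The resolution, which is the delicate geometric step, is that the site $\bm 0$ itself must contribute a full factor: I would argue that $\cS^{\bm 0}_{\bm 0}\cong\cM_2$ is forced (otherwise all $\cS^{\bm 0}_{\bm x}$ abelian makes $\bigotimes_{\bm x}\cS^{\bm 0}_{\bm x}$ abelian, impossible), which actually puts us back in case i.\ with $\bm x=\bm 0$ — so the genuinely new situation is subtler. \emph{I expect the main obstacle to be precisely disentangling this case:} showing that when $\cA_{\bm 0}$ maps with a full factor on $\bm 0$ \emph{and} non-trivial spillover onto the neighbours, the spillover must be abelian $\cD(\bm n_i)$ factors, \emph{symmetric} under $\pm\bm e_i$ (same $\bm n_i$ on $+\bm e_i$ and $-\bm e_i$), with no $\cM_2$ anywhere else. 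Symmetry comes from matching the commutation conditions~(\ref{iff local rule}) in direction $\bm e_i$ against those in direction $-\bm e_i$: both $\cS^{\bm 0}_{\bm e_i}$ and $\cS^{\bm 0}_{-\bm e_i}$ must commute with their translates on the shared overlap cells, and a short computation with the structure of $\cD(\bm n)$-algebras (their commutant within $\cM_2$ is themselves) forces $\bm n_{\bm e_i}=\bm n_{-\bm e_i}=:\bm n_i$. Finally, Lemma~\ref{lem:overlaplemma}.\ref{itlem2} removes any residual $\cI$-factors, and the product structure $\alpha(\cA_{\bm 0})\se\cA_{\bm 0}\otimes\bigotimes_i(\cD(\bm n_i)_{-\bm e_i}\otimes\cD(\bm n_i)_{\bm e_i})$ follows, which is exactly case ii. Throughout, Lemma~\ref{lem:overlaplemma}.\ref{itlem5} is used to reduce statements about $\cS^{\bm 0}_{\Omega}$ for larger $\Omega$ to the single-site generators, and translation invariance to transport relations between different reference cells.
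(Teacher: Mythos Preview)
Your strategy---locate the ``full'' support site and split into edge-full (case~i) versus centre-full-with-spillover (case~ii)---is the paper's, but the core step is broken. You claim that for $\bm y\in\cN\setminus\{\bm 0,\bm x\}$ Lemma~\ref{lem:overlaplemma}.\ref{itlem3} gives $[\cS^{\bm 0}_{\bm x},\cS^{\bm y}_{\bm x}]=0$ at the single site $\bm x$. It does not: item~\ref{itlem3} yields commutation on the \emph{entire} overlap $\xi_{\bm y}$, and for any $\bm y\in\cN\setminus\{\bm 0\}$ this is the two-site set $\{\bm 0,\bm y\}$. Commutation of subalgebras of $\cM_2\otimes\cM_2$ does not descend to their single-site supports (e.g.\ $\langle\sigma^1\otimes\sigma^1\rangle$ and $\langle\sigma^3\otimes\sigma^3\rangle$ commute, yet $\langle\sigma^1\rangle$ and $\langle\sigma^3\rangle$ do not). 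Your geometry is also off: for $\bm x=\bm e_i$ and your suggested $\bm y=-\bm e_i$ the overlap $\{\bm 0,-\bm e_i\}$ does not even contain $\bm x$, and $\bm x\in\bm y+\cN$ requires $\|\bm x-\bm y\|_1\le 1$, not $\le 2$. The paper instead takes $\bm y=2\bm e_i$ (outside $\cN$, but Lemma~\ref{lem:iff_werner} only needs $\xi_{\bm y}\ne\emptyset$), so that $\xi_{2\bm e_i}=\{\bm e_i\}$ is a single site and item~\ref{itlem3} plus translation directly give $[\cS^{\bm 0}_{\bm e_i},\cS^{\bm 0}_{-\bm e_i}]=0$; the four-case split for each opposite pair then follows.

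Your outline also never excludes the asymmetric pair $(\cS^{\bm 0}_{-\bm e_i},\cS^{\bm 0}_{\bm e_i})\cong(\cI,\cD(\bm n))$, which is perfectly compatible with a full centre and with your dichotomy. The paper kills it via the two-site overlap $\xi_{\bm e_i}=\{\bm 0,\bm e_i\}$: since $\cS^{\bm e_i}_{\bm 0}\cong\cS^{\bm 0}_{-\bm e_i}=\cI$, item~\ref{itlem2} collapses one tensor leg of $\cS^{\bm e_i}_{\{\bm 0,\bm e_i\}}$, and the remaining commutation at $\bm e_i$ forces $\cS^{\bm 0}_{\bm e_i}\se\cI$, a contradiction. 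Finally, the paper's case~ii analysis is considerably longer than ``a short computation'': beyond the $\pm\bm e_i$ symmetry it analyses the two-site supports $\cS^{\bm 0}_{\{\bm{\varepsilon}_i,\bm{\varepsilon}_j\}}$ for $i\ne j$ and must rule out a residual ``perpendicular'' configuration $\mathcal W_\perp$ by explicit commutator identities on $\alpha_0(\cA_{\bm 0})$.
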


 \begin{proof}
     By item~\textit{3.} of 
     Lemma \ref{lem:overlaplemma} we have
     \begin{equation}\label{supp alg comm}
         \left[\cS_{\xi_{\bm{x}}}^{\bm{0}},\, \cS_{\xi_{\bm{x}}}^{\bm{x}}\right]=0 
     \end{equation} 
     for each pair $(\bm{x},\,\xi_{\bm{x}})$ among the following 
     \begin{enumerate}
         \item\label{comm1} $\left( 2 \bm{e}_i\,,\, \{\bm{e}_i\} \right)\,,$
         \item\label{comm2} $\left( \bm{e}_i\pm\bm{e}_j\,,\,\{\bm{e}_i, \pm \bm{e}_j\}\right)$ with $i \neq j\,,$
         \item\label{comm3} $\left( \bm{e}_i\,,\,\{ \bm{0},\bm{e}_i \} \right)$\,,
     \end{enumerate}
     with the overlap sets $\xi_{\bm{x}}$ as in Lemma~\ref{lem:iff_werner}, namely $ \xi_{\bm{x}} = (\bm{0} + \cN) \cap \left(\bm{x}+\cN\right)\neq \emptyset$ (FIG.~\ref{fig:overlaps}).
     \begin{figure}[h]
    \centering
    \includegraphics[width=8.5cm]{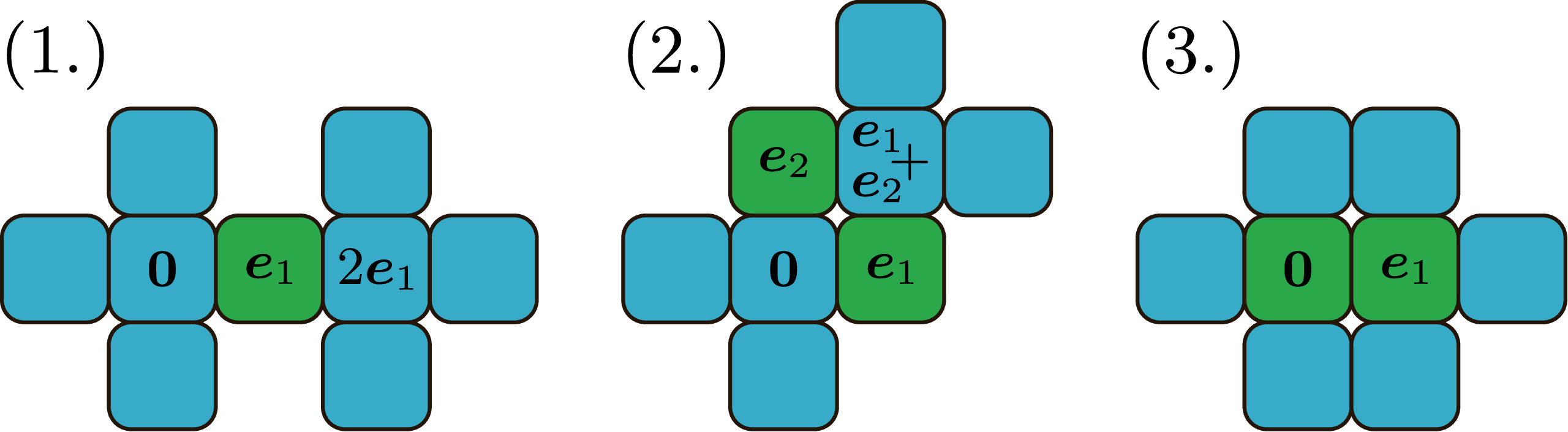}
    \caption{Instances of overlapping neighborhoods for the relevant cases of $(\bm{x},\,\xi_{\bm{x}})$ listed above (\ref{supp alg comm}). The neighborhoods of $\bm{0}$ and $\bm{x}$ are depicted in blue while their overlap $\xi_{\bm x}$ in green.}
    \label{fig:overlaps}
\end{figure}
     First of all, we mark that by translation invariance
     we have $ \cS_{\xi_{\bm{x}}}^{\bm{x}} \cong \cS_{-\xi_{\bm{x}}}^{\bm{0}}$. Thus, commutation relations for case~\ref{comm1} impose the support algebras $\cS_{\pm\bm{e}_i}^{\bm{0}}$ located at opposite sites of $\cN$ to be isomorphic to commuting algebras 
     \[(\cS_{-\bm{e}_i}^{\bm{0}}, \cS_{+\bm{e}_i}^{\bm{0}}) \cong  \left\{
     \begin{aligned}
     &(\cI, \cI),\\ 
     &(\cD(\bm{n}_i),\cD(\bm{n}_i)), \\
     &(\cI, \cM_2),\\
     &(\cI,\cD(\bm{n}_i)).
     \end{aligned}\right.\]
     (and permutations).\\
     Let us examine these four possible configurations.\par
     For a given $i$ let $(\cS_{-\bm{e}_i}^{\bm{0}},\cS_{\bm{e}_i}^{\bm{0}}) \cong (\cI,\cM_2)$, then the conditions from case~\ref{comm2} impose $ \cS_{\pm\bm{e_{j}}}^{\bm{0}} \cong \cI\; \forall\,j\neq i$ and those from case~\ref{comm3} fix $\cS_{\bm{0}}^{\bm{0}} \cong \cI$. That is, if one of the support algebras on the edges of $\cN$ is isomorphic to $\cM_2$ then all others are trivial. This case corresponds to a shift, which cannot have a bounded FDQC realization (see Ref.~\cite{GNVW}).\par
     Now, let us consider the case where for all $i$ one has $(\cS_{-\bm{e}_i}^{\bm{0}},\cS_{\bm{e}_i}^{\bm{0}}) \cong (\cI,\cI)$. Then $\cS_{\bm{0}}^{\bm{0}} \cong \cM_2$ otherwise the non-abelian algebra $\alpha_0(\cA_{\bm{0}})$ would contradictorily be a subalgebra of an abelian one. The conditions from cases~\ref{comm2} and~\ref{comm3} are trivially satisfied. 
     
     For the same reason, having $(\cS_{-\bm{e}_i}^{\bm{0}},\cS_{\bm{e}_i}^{\bm{0}}) \cong (\cI,\cD(\bm{n}))$ for $i$ given, implies $\cS_{\bm{0}}^{\bm{0}} \cong \cM_2$. However, in this case, the conditions of case~\ref{comm3} would fail to hold. Configuration $(\cI,\cD(\bm{n}))$ is thus not allowed.\par
     Finally, if $(\cS_{-\bm{e}_i}^{\bm{0}},\cS_{\bm{e}_i}^{\bm{0}}) \cong (\cD(\bm{n}_i), \cD(\bm{n}_i)) $ for a given $i$, from conditions~\ref{comm2} it follows that for $j\neq i$ either $(\cS_{-\bm{e}_j}^{\bm{0}},\cS_{\bm{e}_j}^{\bm{0}}) \cong (\cI, \cI)$ or $(\cD(\bm{n}_j), \cD(\bm{n}_j))$, as shown in the following. We remark that if a pair of support algebras $(\cS_{-\bm{e}_i}^{\bm{0}},\cS_{\bm{e}_i}^{\bm{0}})$ is isomorphic to trivial algebras, the QCA can be reduced to one with lower dimension $\alpha \cong \alpha'\otimes \text{id}$ with $\alpha'$ QCA on $\cA_{\Z^{s-1}}$. Therefore the relevant analysis is restricted to the case $(\cS_{-\bm{e}_i}^{\bm{0}},\cS_{\bm{e}_i}^{\bm{0}}) \cong (\cD(\bm{n}_i), \cD(\bm{n}_i)) \; \forall\,i$.  The only possibility allowed is $\bm{n}_j \parallel \bm{n}_i$, i.e. $\cD(\bm{n}_j) \cong \cD(\bm{n}_i)$, $ \forall\,i,j$. Indeed, let us consider a fixed pair $i<j$. 
    Let $\bm{\varepsilon}_k \coloneqq \pm \bm{e}_k$ with $k=i,j$. Given that, by item~\textit{1.} of Lemma~\ref{lem:overlaplemma}, we have the inclusion
\begin{equation}\label{inclusion}
    \begin{gathered}
        \cS_{\{\bm{\varepsilon}_i, \bm{\varepsilon}_j \}}^{\bm{0}} \se \cS_{\bm{\varepsilon}_i}^{\bm{0}}\otimes \cS_{\bm{\varepsilon}_j}^{\bm{0}}\cong \cD(\bm{n}_i) \otimes \cD(\bm{n}_j),\\
    \end{gathered}
\end{equation}
with 
$\{G_0, G_1\}\coloneqq\{ \bm{n}_i\cdot \bm{\sigma} \otimes I\,, I \otimes \bm{n}_j \cdot \bm{\sigma}\}$ generating the algebra $\cD(\bm{n}_i) \otimes \cD(\bm{n}_j) $,
it is convenient to impose condition \ref{comm2} case by case: when only one generator, both or neither belongs to $\cS_{\{\bm{\varepsilon}_i, \bm{\varepsilon}_j \}}^{\bm{0}}$ at fixed $\bm{\varepsilon}_i, \bm{\varepsilon}_j$. The first instance leads to a contradiction: if $G_0 \in \cS_{\{\bm{\varepsilon}_i, \bm{\varepsilon}_j \}}^{\bm{0}}$ and $G_{1} \notin \cS_{\{\bm{\varepsilon}_i, \bm{\varepsilon}_j \}}^{\bm{0}}$, then $\cS_{\bm{\varepsilon}_j }^{\bm{0}} \cong \cI$, and analogously for $G_0 \notin \cS_{\{\bm{\varepsilon}_i, \bm{\varepsilon}_j \}}^{\bm{0}}$ and $G_{1} \in \cS_{\{\bm{\varepsilon}_i, \bm{\varepsilon}_j \}}^{\bm{0}}$. If both generators belong to the support algebra (Rel.~\ref{inclusion} turns into an identity) then their commutator with any operator $T \in \cS_{\{\bm{\varepsilon}_i, \bm{\varepsilon}_j \}}^{\bm{\varepsilon}_i + \bm{\varepsilon}_j}\cong\cS_{-\{\bm{\varepsilon}_i, \bm{\varepsilon}_j\}}^{\bm0}$ must vanish. Expressing $T$ as
\begin{equation*}
    T= \sum_{a,b=0}^1 t_{ab} (\bm{n}_j \cdot \bm{\sigma})^a \otimes (\bm{n}_i \cdot \bm{\sigma})^b \quad \text{with}\; (\bm{r} \cdot \bm{\sigma})^0\coloneqq I,
\end{equation*}
we then explicitly require $[G_k,T]=0$, i.e.
\begin{equation*}
    \begin{aligned}
    0=& (1-k)\sum_{b=0}^1 t_{1b} (\bm{n}_i\times\bm{n}_j)\cdot \bm{\sigma} \otimes (\bm{n}_i \cdot \bm{\sigma})^b -\\
    &-k\sum_{a=0}^1 t_{a1} (\bm{n}_j \cdot \bm{\sigma})^a \otimes (\bm{n}_i\times\bm{n}_j)\cdot \bm{\sigma}
    \end{aligned}
\end{equation*}
for both $k=0,1$. The terms on the r.h.s.~are linearly independent and therefore they must vanish separately. The equation is satisfied for $\bm{n}_j \parallel \bm{n}_i$. Otherwise, the condition leads to $t_{01}=t_{10}=t_{11}=0$ which however brings to a contradiction: $\cS_{-\bm{\varepsilon}_i }^{\bm{0}}\cong \cS_{- \bm{\varepsilon}_j}^{\bm{0}}\cong \cI$. Finally, if none of the generators belongs to $ \cS_{\{\bm{\varepsilon}_i, \bm{\varepsilon}_j \}}^{\bm{0}}$, we have $\cS_{\{\bm{\varepsilon}_i, \bm{\varepsilon}_j \}}^{\bm{0}} \cong \left\{  a \, I \otimes I + b \, \bm{n}_i \cdot \bm{\sigma} \otimes \bm{n}_j \cdot \bm{\sigma} \; \mid \;  a,b \in \C \right\} \eqqcolon \mathcal{W}$. For any $O \in \cS_{\{\bm{\varepsilon}_i, \bm{\varepsilon}_j \}}^{\bm{0}} $ and $T$ as above we must have $\left[O\,, T \right]=0$, i.e.
\begin{equation*}
    \begin{aligned}
    0 &=t_{01}\, \bm{n}_i \cdot \bm{\sigma}\otimes (\bm{n}_j\times\bm{n}_i)\cdot \bm{\sigma} +\\
    &+  t_{10}\, (\bm{n}_i\times\bm{n}_j)\cdot \bm{\sigma} \otimes \bm{n}_j \cdot \bm{\sigma} +\\
    & + t_{11}\, (\bm{n}_i \cdot \bm{n}_j) \left[ (\bm{n}_i\times\bm{n}_j)\cdot \bm{\sigma} \otimes I - I \otimes (\bm{n}_i\times\bm{n}_j)\cdot \bm{\sigma} \right]\,.
    \end{aligned}
\end{equation*}
All the terms on the r.h.s.~must vanish separately, being linearly independent.
Once again $\bm{n}_j \parallel \bm{n}_i$ verifies the equation. If instead $\bm{n}_j \nparallel \bm{n}_i$ the two possible solutions are (i) $t_{01}=t_{10}=t_{11}=0$ and (ii) $t_{01}=t_{10}=0$ and $\bm{n}_j \perp \bm{n}_i $. Solution (i) implies $\cS_{-\{\bm{\varepsilon}_i, \bm{\varepsilon_j }\}}^{\bm{0}} \cong \cI \otimes \cI$ while solution (ii) implies $ \cS_{-\{\bm{\varepsilon}_i, \bm{\varepsilon}_j \}}^{\bm{0}} \cong \mathcal{W}$ with $\bm{n}_j \perp \bm{n}_i$. Referring to the latter case we will write $\mathcal{W}_{\perp}$ instead of $\mathcal W$ and $\bm{n}$, $\bm{n}_{\perp}$ instead of $\bm{n}_i$ and $\bm{n}_j$. Either way, all the algebras $\cS_{\{\mp\bm{\varepsilon}_i, \pm\bm{\varepsilon}_j \}}^{\bm{0}}$ must not contain any generator $G_k$ since otherwise we would have $\bm{n}_j \parallel \bm{n}_i$; therefore they can only be isomorphic to either $\mathcal{W}_{\perp}$ or $\cI \otimes \cI$. Since the following holds
\begin{subequations}
  \begin{empheq}[left=\empheqlbrace]{align}
    &  \, \alpha_0(\cA_{\bm{0}}) \se \cS_{\bm{0}}^{\bm{0}} \otimes \bigotimes_{j>i}\left( \cS_{\{\bm{\varepsilon}_i, \bm{\varepsilon}_j \}}^{\bm{0}} \otimes \cS_{-\{\bm{\varepsilon}_i, \bm{\varepsilon}_j \}}^{\bm{0}} \label{sys a}\right)\\
    &  \,  \alpha_0(\cA_{\bm{0}}) \se \cS_{\bm{0}}^{\bm{0}} \otimes \bigotimes_{j>i}\left( \cS_{\{-\bm{\varepsilon}_i, \bm{\varepsilon}_j \}}^{\bm{0}} \otimes \cS_{\{\bm{\varepsilon}_i, -\bm{\varepsilon}_j \}}^{\bm{0}} \label{sys b}\right),
  \end{empheq}
\end{subequations}
the case where any of the factors on the r.h.s.~of Rel.s~\ref{sys a} and \ref{sys b} is $\cI\otimes\cI$ is already excluded, as
we showed that for a fixed $\bm e_i$ the configurations $(\cI,\cD(\bm n))$ and $(\cD(\bm n),\cI)$ are excluded.
Alternatively, i.e. if $ \cS_{\{\bm{\varepsilon}_i, \bm{\varepsilon}_j \}}^{\bm{0}}\cong  \mathcal{W}_{\perp}\; \forall \, \bm{\varepsilon}_i, \bm{\varepsilon}_j $, the image of an operator $O \in \cA_{\bm{0}}$ under the local update rule is given by
\begin{equation*}
            \alpha_0(O)= \begin{matrix}
                    & & I & &\\
                    & & \otimes & &\\
                    I & \otimes & A(O) & \otimes & I\\
                    & & \otimes & &\\
                    & & I & &
                \end{matrix}
                +  \begin{matrix}
                    & & \bm{n}_{\perp} \cdot \bm{\sigma} & &\\
                    & & \otimes & &\\
                     & \bm{n} \cdot \bm{\sigma} & \otimes\;\;B(O) \;\;\otimes&   \bm{n} \cdot \bm{\sigma}&\\
                    & & \otimes & &\\
                    & & \bm{n}_{\perp} \cdot \bm{\sigma} & &
                \end{matrix}\,,
        \end{equation*}
where  $A,B$ are linear maps from $\cM_2$ to algebra of $\cN\setminus\{\bm e_i,\bm e_j,-\bm e_i,-\bm e_j\}$.
We observe that the linear map $B(\cdot)$ cannot be null, otherwise $\alpha_0$ would be again a site-wise unitary leading to the unacceptable case $\cS_{\pm\bm{e}_i}^{\bm{0}} \cong \cS_{\pm\bm{e}_j}^{\bm{0}} \cong \cI$. Let us now consider the commutators
\begin{align}
   & \left[ \alpha(\cA_{\bm{0}}), \alpha(\cA_{\bm{e}_i}) \right]=0 \label{hor}\,, \\
   & \left[ \alpha(\cA_{\bm{0}}), \alpha(\cA_{\bm{e}_j}) \right]=0 \label{vert}\,.
\end{align}
The algebras $\alpha(\cA_{\bm{0}})$ and $\alpha(\cA_{\bm{e}_i})$ overlap on sites $\{\bm{0},\bm{e}_i\}$, whereas the algebras $\alpha(\cA_{\bm{0}})$ and $\alpha(\cA_{\bm{e}_j})$ overlap on $\{\bm{0},\bm{e}_j\}$. We can focus just on these sites since the commutation of the algebras is trivial elsewhere. Let us start with the first (\ref{hor}). For any $L,R \in \cA_{\bm{0}}$ we have
\begin{equation*}
    \begin{aligned}
            & 0=\left[ \alpha_0(L), \tau^{\bm{e}_i}\alpha_0(R) \right] =\\
            &  \begin{pmatrix}
                    & & I & \bm{n}_{\perp} \cdot \bm{\sigma} & \\
                    & & \otimes & \otimes &\\
                     & I & \otimes\;\;[A(L)\,,\,\bm{n} \cdot \bm{\sigma} ] \;\;\otimes&  B(R) & \otimes \; \bm{n} \cdot \bm{\sigma}\\
                    & & \otimes & \otimes &\\
                    & & I & \bm{n}_{\perp} \cdot \bm{\sigma} &
                \end{pmatrix} +\\
                &  \begin{pmatrix}
                    & & \bm{n}_{\perp} \cdot \bm{\sigma} & I & \\
                    & & \otimes & \otimes &\\
                     &  \bm{n} \cdot \bm{\sigma} & \otimes\;\; B(L)\;\;\otimes&  [\bm{n} \cdot \bm{\sigma}\,,\,A(R) ] & \otimes \; I\\
                    & & \otimes & \otimes &\\
                    & & \bm{n}_{\perp} \cdot \bm{\sigma} & I &
                \end{pmatrix} +\\
                &  \frac{1}{2} \left(  \begin{matrix}
                    & & \bm{n}_{\perp} \cdot \bm{\sigma} & \bm{n}_{\perp} \cdot \bm{\sigma} & \\
                    & & \otimes & \otimes &\\
                     & \bm{n} \cdot \bm{\sigma} & \otimes\;\;[B(L)\,,\,\bm{n} \cdot \bm{\sigma} ] \;\;\otimes&   \{\bm{n} \cdot \bm{\sigma}\,,\,B(R) \} & \otimes \; \bm{n} \cdot \bm{\sigma}\\
                    & & \otimes & \otimes &\\
                    & & \bm{n}_{\perp} \cdot \bm{\sigma} & \bm{n}_{\perp} \cdot \bm{\sigma} &
                \end{matrix} + \right.\\
                & \left. \begin{matrix}
                    & & \bm{n}_{\perp} \cdot \bm{\sigma} & \bm{n}_{\perp} \cdot \bm{\sigma} & \\
                    & & \otimes & \otimes &\\
                     & \bm{n} \cdot \bm{\sigma} & \otimes\;\; \{B(L)\,,\,\bm{n} \cdot \bm{\sigma} \} \;\;\otimes&   [\bm{n} \cdot \bm{\sigma}\,,\,B(R) ] & \otimes \; \bm{n} \cdot \bm{\sigma}\\
                    & & \otimes & \otimes &\\
                    & & \bm{n}_{\perp} \cdot \bm{\sigma} & \bm{n}_{\perp} \cdot \bm{\sigma} &
                \end{matrix} \right) \,,
        \end{aligned}
\end{equation*}
where we used the identity
        \begin{equation*}\label{comm simm}
            [X \otimes Y, Z \otimes W]=\frac{1}{2}\left( [X,Z]\otimes \{Y,W\} +  \{X,Z\} \otimes [Y, W]\right)\,.
        \end{equation*}
The three terms are linearly independent as the operators at the edges differ, so they must cancel separately. Hence for any operator $O,L,R \in \cM_2  $
\begin{subnumcases}{}
  &  $ [A(O)\,,\,\bm{n} \cdot \bm{\sigma} ]=0 $\label{subnum ax}
   \\[2mm]
   & $ \begin{aligned}
       &  [B(L)\,,\,\bm{n} \cdot \bm{\sigma} ] \otimes  \{\bm{n} \cdot \bm{\sigma}\,,\,B(R) \}\, +  \\
    &  + \{B(L)\,,\,\bm{n} \cdot \bm{\sigma} \} \otimes [\bm{n} \cdot \bm{\sigma}\,,\,B(R) ]=0\,.
   \end{aligned} $ \label{subnum bx}
\end{subnumcases}
The two terms of Eq.~\ref{subnum bx} are orthogonal in the Frobenius product and thus must cancel independently. If $[B(L)\,,\,\bm{n} \cdot \bm{\sigma} ]=0$, then $[\bm{n} \cdot \bm{\sigma}\,,\,B(R) ]=0$ since $ \{B(L)\,,\,\bm{n} \cdot \bm{\sigma} \}$ cannot be null. For the same reason, if $\{B(L)\,,\,\bm{n} \cdot \bm{\sigma} \}=0$, then $\{\bm{n} \cdot \bm{\sigma}\,,\,B(R) \}=0$. We refer to the former eventuality by $\mathscr{C}^{\bm{n}}$ and to the latter by $\mathscr{A}^{\bm{n}}$.\\
Analogously, for any operator $O,L,R \in \cM_2  $, from commutation \ref{vert} we conclude
\begin{subnumcases}{}
  &  $ [A(O)\,,\,\bm{n}_{\perp}  \cdot \bm{\sigma} ]=0 $\label{subnum az}
   \\[2mm]
   & $ \begin{aligned}
       &  [B(L)\,,\,\bm{n}_{\perp}  \cdot \bm{\sigma} ] \otimes  \{\bm{n}_{\perp}  \cdot \bm{\sigma}\,,\,B(R) \}\, +  \\
    &  + \{B(L)\,,\,\bm{n}_{\perp}  \cdot \bm{\sigma} \} \otimes [\bm{n}_{\perp}  \cdot \bm{\sigma}\,,\,B(R) ]=0\,,
   \end{aligned} $ \label{subnum bz}
\end{subnumcases}
where Eq.~\ref{subnum bz} yields analogs conditions $\mathscr{C}^{\bm{n}_{\perp}}$, $\mathscr{A}^{\bm{n}_{\perp}}$ with $\bm{n}_{\perp}$ instead of $\bm{n}$. For any $ O \in \cM_2$, if we assume either $\mathscr{C}^{\bm{n}} \land \mathscr{C}^{\bm{n}_{\perp}}$ or $\mathscr{A}^{\bm{n}} \land \mathscr{A}^{\bm{n}_{\perp}}$, it follows that $B(O) \propto I$; if we assume $\mathscr{C}^{\bm{n}} \land \mathscr{A}^{\bm{n}_{\perp}}$ we obtain $B(O) \propto  \bm{n} \cdot \bm{\sigma}$; lastly, if we assume $\mathscr{A}^{\bm{n}} \land \mathscr{C}^{\bm{n}_{\perp}}$ we get $B(O)\propto \bm{n}_{\perp}  \cdot \bm{\sigma}$.
From \ref{subnum ax} and \ref{subnum az} it follows that $A(O) \propto I$. 
In any case, $\alpha_0(\cA_{\bm{0}})$ turns out to be, contradictorily, an abelian algebra.
 \end{proof}
\subsection{Proof of Theorem~1}
\begin{proof}
 In case \textit{i.} of Lemma \ref{lem:confsuppalg} the local update rule $\alpha_0$ must realize either a shift of the algebra $\cA_{\bm{0}}$ by one site along $\pm \bm{e}_i$ (if $\bm x=\pm{\bm e}_i$), possibly combined with a site-wise unitary, or just a site-wise unitary (if $\bm x=\bm 0$).
 In case \textit{ii.} of Lemma \ref{lem:confsuppalg}, after a site-wise unitary that performs a basis change, the image of any operator $O \in \cA_{\bm{0}}$ will generally be of the form
\begin{equation*}\label{eq: image gen}
\alpha_0(O)=\sum_{\bm a, \bm b}\dyad{\bm a}_+\otimes                    
                    U^\dag_{\bm a\bm b}\, O\, U_{\bm a\bm b}\otimes\dyad {\bm b}_{-},
\end{equation*}
where $\dyad{\bm a}_+=\bigotimes_{j=1}^s\dyad{a_j}_{{\bm e}_j}$ and 
$\dyad{\bm b}_-=\bigotimes_{j=1}^s\dyad{b_j}_{-{\bm e}_j}$. 
For fixed values of $a_i,b_i$ in all coordinate directions except one, the local update rule must 
obey the same constraints as that of a QCA on $\mathbb Z$, which, according to 
Ref.~\cite{schumacher2004reversible}, is given by a local basis change followed by a multiply
controlled-phase in the directions $\bm{e}_i,-\bm{e}_i$. It must then be
\begin{equation*}
    \begin{gathered}
              U_{\bm a\bm b}=\exp{\frac{i}{2}\left[(a_i+b_i)\phi_{\bar{\bm a}_i\bar{\bm b}_i}\right](I-\sigma^3)}W_{\bar{\bm a}_i\bar{\bm b}_i}\,,
    \end{gathered}
\end{equation*}
where $\bar{\bm v}_i$ denotes the $s-1$-tuple obtained from the $s$-tuple $\bm v$ by 
removing the $i$-th component. The same description must hold if we consider a different 
coordinate direction, say $\bm e_j$. Then
\begin{equation*}
    \begin{gathered}
              U_{\bm a\bm b}=\exp{\frac{i}{2}\left[(a_j+b_j)\phi_{\bar{\bm a}_j\bar{\bm b}_j}\right](I-\sigma^3)}W_{\bar{\bm a}_j\bar{\bm b}_j}\,.
    \end{gathered}
\end{equation*}
Clearly, $W_{\bar{\bm a}_j\bar{\bm b}_j}^\dag W_{\bar{\bm a}_i\bar{\bm b}_i}$ must be diagonal in the eigenbasis of $\sigma^3$, and thus we can set 
$W_{\bar{\bm a}_i\bar{\bm b}_i}=\exp{\frac{i}{2}\lambda_{\bar{\bm a}_i\bar{\bm b}_i}(I-\sigma^3)}W_0$. From now on, we will disregard $W_0$ that can be included in the initial local change of basis. Consequently, we can write
\begin{equation*}
    \begin{gathered}
              U_{\bm a\bm b}=\exp{\frac{i}{2}\left[(a_i+b_i)\phi_{\bar{\bm a}_i\bar{\bm b}_i}+ \lambda_{\bar{\bm a}_i\bar{\bm b}_i}\right](I-\sigma^3)}\,.
    \end{gathered}
\end{equation*}  
The two descriptions obtained considering different directions $\bm e_i$ and $\bm e_j$ are 
equivalent and therefore the rotation angles must coincide modulo $2k\pi$, for $k \in \Z$
\begin{equation}\label{sys eqs angles}
    (a_i+b_i)\phi_{\bar{\bm a}_i\bar{\bm b}_i} + \lambda_{\bar{\bm a}_i\bar{\bm b}_i}= (a_j+b_j)\phi_{\bar{\bm a}_j\bar{\bm b}_j} + \lambda_{\bar{\bm a}_j\bar{\bm b}_j}\,.
\end{equation}
The system of equations (\ref{sys eqs angles}) is solved by
\begin{align*}
    &\lambda_{\bar{\bm a}_i\bar{\bm b}_i}=(a_j+b_j)\psi_{\bar{\bm a}_{ij}\bar{\bm b}_{ij}}+\mu_{\bar{\bm a}_{ij}\bar{\bm b}_{ij}}\,,\\
     &\phi_{\bar{\bm a}_i\bar{\bm b}_i}=(a_j+b_j)\chi_{\bar{\bm a}_{ij}\bar{\bm b}_{ij}}+\nu_{\bar{\bm a}_{ij}\bar{\bm b}_{ij}}\,,
\end{align*} 
where $\bar{\bm v}_{ij}$ represents the $s-2$-tuple obtained from $\bm v$ by removing the 
$i$-th and $j$-th component, $\psi_{\bar{\bm a}_{ij}\bar{\bm b}_{ij}}$ is equal to 
$\phi_{\bar{\bm a}_j\bar{\bm b}_j}$ with $a_i=b_i=0$, and 
$\mu_{\bar{\bm a}_{ij}\bar{\bm b}_{ij}}$ is equal to 
$\lambda_{\bar{\bm a}_j\bar{\bm b}_j}$ with the same condition, while similarly 
$\chi_{\bar{\bm a}_{ij}\bar{\bm b}_{ij}}$ is equal to 
$\phi_{\bar{\bm a}_j\bar{\bm b}_j}$ with $a_i=0$, $b_i=1$ or $a_i=1$, $b_i=0$, and 
$\nu_{\bar{\bm a}_{ij}\bar{\bm b}_{ij}}$ is equal to 
$\lambda_{\bar{\bm a}_j\bar{\bm b}_j}$ with the same conditions. 
By imposing that the following commutators vanish (as per 
Lemma~\ref{lem:iff_werner}), with fixed $\bar{\bm b}_i,\bar{\bm a}'_i$ and  
$\bar{\bm b}_j,\bar{\bm a}'_j$, respectively
\begin{gather*}
    \sum_{b_i,a'_i}\left[ U_{\bm a\bm b}^\dag\, \sigma_\pm U_{\bm a\bm b} \otimes \dyad{b_i}\,,\, \dyad{a'_i} \otimes U_{\bm a'\bm b'}^\dag\, \sigma_\pm U_{\bm a'\bm b'} \right]\,,\\
        \sum_{b_j,a'_j}\left[ U_{\bm a\bm b}^\dag\, \sigma_\pm U_{\bm a\bm b} \otimes \dyad{b_j}\,,\, \dyad{a'_j} \otimes U_{\bm a'\bm b'}^\dag\, \sigma_\pm U_{\bm a'\bm b'} \right]\,,
\end{gather*}
where $\sigma_{\pm} \coloneqq (\sigma_1 \pm i \sigma_2)$, we obtain 
\begin{equation*}
    \phi_{\bar{\bm a}_i\bar{\bm b}_i}=\phi_{\bar{\bm a}'_i\bar{\bm b}'_i}\quad \text{and}\quad \phi_{\bar{\bm a}_j\bar{\bm b}_j}=\phi_{\bar{\bm a}'_j\bar{\bm b}'_j}\,,
\end{equation*} 
i.e., these angles are independent of the operators chosen on the edges.
Hence we conclude 
\begin{equation*}
    U_{\bm a\bm b}= \exp{\frac{i}{2}\left[\sum_{i=1}^s(a_i+b_i)\,\varphi_i\right]\,(I-\sigma^3)} 
\end{equation*}
with $\varphi_i \in [0,2\pi]$, that is the composition of $2s$ two-qubit controlled-phases of angle $\varphi_i$ along directions $\pm\bm{e}_i$.
\end{proof}

\section{Section C: Index Classification}
\red{The two-layer Finite-Depth Quantum Circuits (FDQCs) discussed in the body of the manuscript are known in the literature as Margolus FDQCs \cite{schumacher2004reversible}}.
\begin{definition}\label{def:margolus}
    \red{A \emph{Margolus FDQC (MFDQC)} is a depth-$2$ quantum circuit $\alpha=\upsilon_2 \cdot \upsilon_1$ whose gates act on the Margolus partition of the lattice (FIG.~\ref{fig:Margolus}), i.e.~$\{\Lambda_{\bm j}^1\}=\{\mathfrak{c}~+~2\bm{j}\}_{\bm j \in \Z^s} $and $\{\Lambda_{\bm j}^2\}=\{\mathfrak q(\bm q)+2\bm j\}_{\bm j \in \Z^s}$ for any one of the $\mathfrak q(\bm q) \in Q$.}
\end{definition}
\noindent
\red{We notice that the realization as an MFDQC resembles the one of a Partitioned QCA (PQCA)~\cite{arrighi2012intrinsically, arrighi2019overview} but with a crucial difference: in a PQCA all the gates of the implementation must be the same.}
\begin{figure}[h]
    \centering
    \includegraphics[width=6.2cm]{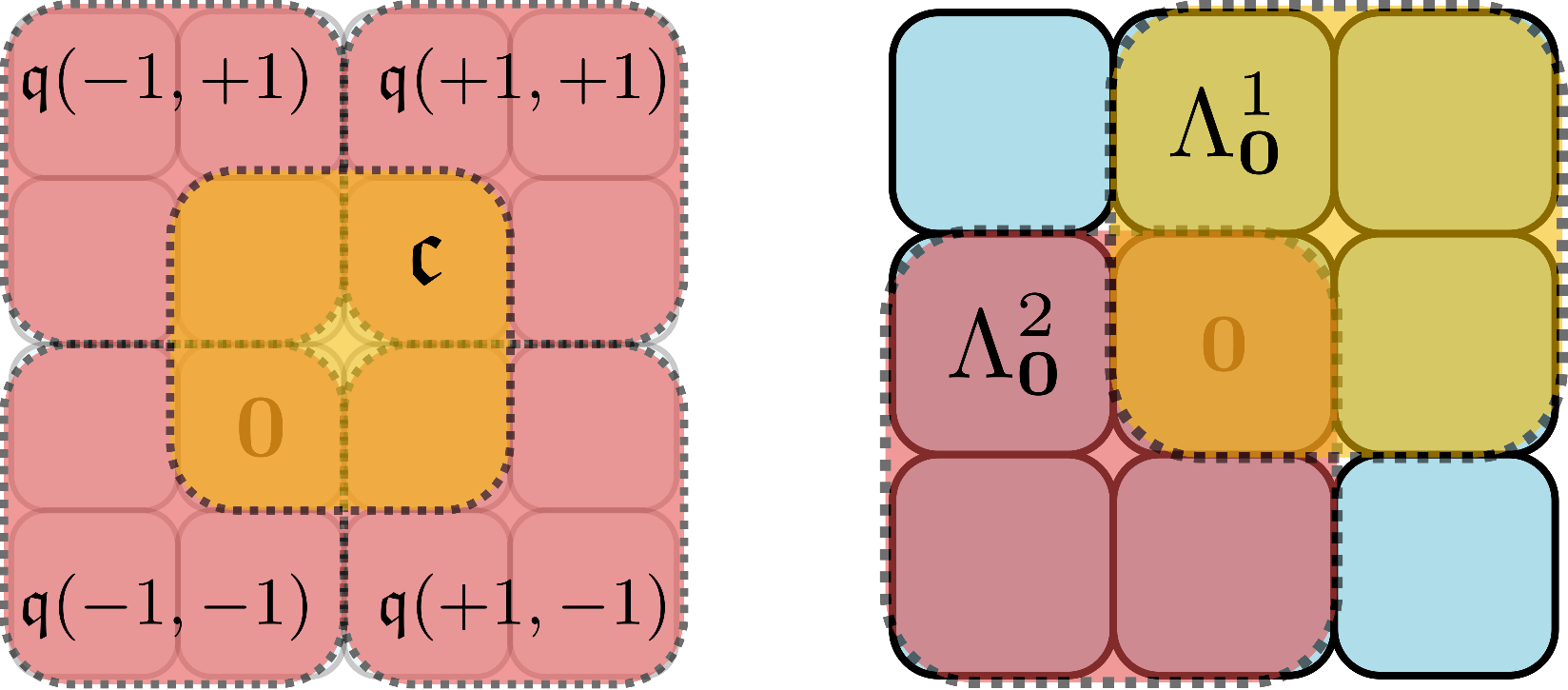}
    \caption{On the left are depicted the super-cell $\mathfrak{c}$ and quadrant sets $\mathfrak q(\bm q)$ on $\Z^2$. On the right are displayed the primitive cells for the Margolus partitioning scheme of $\Z^2$. A Margolus FDQC (MFDQC) consists of the yellow layer followed by the red one.}
    \label{fig:Margolus}
\end{figure}

The minimal, albeit partial, extension to higher spatial dimensions of the one-dimensional index theory is given by the following Lemma~\ref{lem:Vogts}. Its content is a straightforward generalization of known results from Ref.s~\cite{schumacher2004reversible, Vogts2009DiscreteTQ}. The related proof descends directly from them and will therefore be omitted in the following discussion.
\begin{lemma}\label{lem:Vogts}
    Let $\alpha$ be a QCA of a lattice $\Z^s$ of qudits $\cA_x \cong \cM_{d}$ with a neighborhood scheme enclosed within the Moore scheme $\cN \se \{\bm{0}\} \cup  \{ \pm \bm{e_i} \}_{i=1}^s \cup \{ \bm{e_i} \pm \bm{e_j} \}_{i,j=1}^s \cup \{ -\bm{e_i} \pm \bm{e_j} \}_{i,j=1}^s$. If the support algebras of $\alpha(\cA_\mathfrak{c})$ on quadrants $\mathfrak q(\bm q)$ commute as subalgebras of $\cA_\mathfrak{c}$
    \begin{equation}\label{Vogts hyp_sm}
       \left[ \tau^{-\bm{q}}\,\cS_{\mathfrak q(\bm q)}^{\mathfrak{c}},\, \tau^{-\bm{p}}\, \cS_{\mathfrak q({\bm{p}})}^{\mathfrak{c}} \right]=0\quad \forall\,\mathfrak q(\bm q),\mathfrak q({\bm{p}}) \in Q\,, 
    \end{equation}
    then 
    \begin{equation*}
        \alpha(\cA_{\mathfrak{c}} ) = \bigotimes_{\mathfrak q(\bm q) \in Q} \cS^{\mathfrak{c}}_{\mathfrak q(\bm q)} \cong \bigotimes_{\mathfrak q(\bm q)\in Q} \cM_{d(\mathfrak q(\bm q))}\,,
    \end{equation*}
     with $\prod_{\mathfrak q(\bm q)\in Q}d(\mathfrak q(\bm q)) =d(\mathfrak{c})= d^{2^s} $. The QCA $\alpha$ is an MFDQC if and only if $\forall\, \mathfrak q(\bm q) \in Q\;\; d(\mathfrak q(\bm q))=d $.
\end{lemma}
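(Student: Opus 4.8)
The plan is to reduce the statement to the one‑dimensional structure theorem of Ref.s~\cite{schumacher2004reversible,Vogts2009DiscreteTQ,GNVW}. First, some geometry: for the Moore scheme one has $\mathfrak{c}+\cN=\{-1,0,1,2\}^s$, the $2^s$ quadrants $\mathfrak q(\bm q)$, $\bm q\in\{\pm1\}^s$, partition $\mathfrak{c}+\cN$, each $\mathfrak q(\bm q)$ meets $\mathfrak{c}$ in the single site $\bm p_{\bm q}$ with $(\bm p_{\bm q})_i=(1+q_i)/2$, and $\bm q\mapsto\bm p_{\bm q}$ is a bijection onto $\mathfrak{c}$; hence the algebra $\cB\coloneqq\bigotimes_{\bm q}\cS^\mathfrak{c}_{\mathfrak q(\bm q)}$ already satisfies $\alpha(\cA_\mathfrak{c})\se\cB\se\cA_{\mathfrak{c}+\cN}$, and what remains is the reverse inclusion, simplicity of the factors, and the MFDQC dichotomy. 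The relevant commutations come from the observation that $\mathfrak q(\bm q)$ is simultaneously a quadrant of the disjoint super‑cell $\mathfrak{c}+2\bm q$, with $(\mathfrak{c}+\cN)\cap(\mathfrak{c}+2\bm q+\cN)=\mathfrak q(\bm q)$, so item~3 of Lemma~\ref{lem:overlaplemma} gives $[\cS^\mathfrak{c}_{\mathfrak q(\bm q)},\cS^{\mathfrak{c}+2\bm q}_{\mathfrak q(\bm q)}]=0$ in $\cA_{\mathfrak q(\bm q)}$, while item~4 identifies $\cS^{\mathfrak{c}+2\bm q}_{\mathfrak q(\bm q)}=\tau^{2\bm q}\cS^\mathfrak{c}_{\mathfrak q(-\bm q)}$. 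Translating back to $\mathfrak{c}$, one has $\tau^{-\bm q}\cS^\mathfrak{c}_{\mathfrak q(\bm q)}=\cS^{\mathfrak{c}-\bm q}_\mathfrak{c}$, so hypothesis~(\ref{Vogts hyp_sm}) is exactly the assertion that the $2^s$ support algebras on $\mathfrak{c}$ coming from the $2^s$ super‑cells that share $\mathfrak{c}$ as a quadrant pairwise commute in $\cA_\mathfrak{c}$. For $s=1$ this is automatic — the two relevant super‑cells overlap in neighborhood precisely on $\mathfrak{c}$, so item~3 already delivers it — whereas for $s\ge2$ that overlap is strictly larger and the commutation on $\mathfrak{c}$ itself has to be postulated; this is the only place the hypothesis is used.

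With these relations, the rest is essentially a dimension count once one knows that each $\cS^\mathfrak{c}_{\mathfrak q(\bm q)}$ is a full matrix algebra — call its square‑root dimension $d(\mathfrak q(\bm q))$. This simplicity is the crux, and the point where hypothesis~(\ref{Vogts hyp_sm}) is indispensable: support algebras of images of simple algebras need not themselves be simple, and it is the pairwise commutation of the $\cS^{\mathfrak{c}-\bm q}_\mathfrak{c}$ inside $\cA_\mathfrak{c}$ that excludes the non‑simple configurations (for $s=1$ this is built into the one‑dimensional structure theorem). Granting it, since $\alpha(\cA_\mathfrak{c})\cong\cM_{d^{2^s}}$ is simple and sits inside $\cB$, one gets $\prod_{\bm q}d(\mathfrak q(\bm q))\ge d^{2^s}$; on the other hand the commuting pair $\cS^\mathfrak{c}_{\mathfrak q(\bm q)}\cong\cM_{d(\mathfrak q(\bm q))}$ and $\cS^{\mathfrak{c}+2\bm q}_{\mathfrak q(\bm q)}=\tau^{2\bm q}\cS^\mathfrak{c}_{\mathfrak q(-\bm q)}$ inside $\cA_{\mathfrak q(\bm q)}\cong\cM_{d^{2^s}}$ generate a simple subalgebra, whence $d(\mathfrak q(\bm q))\,d(\mathfrak q(-\bm q))\le d^{2^s}$ and, taking the product over $\bm q$, $\prod_{\bm q}d(\mathfrak q(\bm q))\le d^{2^s}$. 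Thus $\prod_{\bm q}d(\mathfrak q(\bm q))=d^{2^s}$, both bounds saturate, and in particular $\dim\cB=\dim\alpha(\cA_\mathfrak{c})$, i.e.\ $\alpha(\cA_\mathfrak{c})=\cB\cong\bigotimes_{\bm q}\cM_{d(\mathfrak q(\bm q))}$, as claimed.

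For the dichotomy, note that by translation invariance the hypothesis equivalently says that, on any quadrant $\mathfrak q$, the $2^s$ support algebras $\cS^{\mathfrak{c}+2\bm j}_\mathfrak q$ contributed by the $2^s$ super‑cells $\mathfrak{c}+2\bm j$ having $\mathfrak q$ as a quadrant pairwise commute; when all $d(\mathfrak q(\bm q))=d$ these are commuting simple algebras of square‑root dimension $d$, and a dimension count forces them to be complementary, $\cA_\mathfrak q=\bigotimes_{\bm j}\cS^{\mathfrak{c}+2\bm j}_\mathfrak q$, with $(\mathfrak{c}+2\bm j)\cap\mathfrak q$ running once over the $2^s$ sites of $\mathfrak q$. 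If $\alpha=\upsilon_2\cdot\upsilon_1$ is an MFDQC (Definition~\ref{def:margolus}), then $\upsilon_1(\cA_\mathfrak{c})=\cA_\mathfrak{c}$ because $\mathfrak{c}$ is a block of $\upsilon_1$'s partition, so $\alpha(\cA_\mathfrak{c})=\upsilon_2(\cA_\mathfrak{c})=\bigotimes_{\bm q}\upsilon_2(\cA_{\bm p_{\bm q}})$ with each $\upsilon_2(\cA_{\bm p_{\bm q}})\se\cA_{\mathfrak q(\bm q)}$ of square‑root dimension $d$; minimality of the support algebras then forces $\cS^\mathfrak{c}_{\mathfrak q(\bm q)}=\upsilon_2(\cA_{\bm p_{\bm q}})$, i.e.\ $d(\mathfrak q(\bm q))=d$. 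Conversely, if all $d(\mathfrak q(\bm q))=d$, choose — translation‑covariantly, using item~4 of Lemma~\ref{lem:overlaplemma} and translation invariance of $\alpha$ — unitaries $W_\mathfrak q\in\cA_\mathfrak q$ carrying each $\cS^{\mathfrak{c}+2\bm j}_\mathfrak q$ onto the single‑site algebra of $(\mathfrak{c}+2\bm j)\cap\mathfrak q$ (possible by the tensor decomposition above); letting $\upsilon_2$ be the corresponding layer on the quadrant partition, $\upsilon_2\circ\alpha$ maps every super‑cell algebra onto itself, hence is conjugation by a product of single‑super‑cell unitaries, i.e.\ a layer $\upsilon_1$ on the super‑cell partition, so $\alpha$ is a depth‑$2$ Margolus circuit. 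The hard part throughout is the simplicity of the quadrant support algebras: it is free for $s=1$ but for $s\ge2$ must be wrung out of hypothesis~(\ref{Vogts hyp_sm}), and once it is in hand everything else is a routine adaptation of Ref.s~\cite{schumacher2004reversible,Vogts2009DiscreteTQ,GNVW}.
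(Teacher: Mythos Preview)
The paper does not actually supply a proof of this lemma: it states that the content is a straightforward generalisation of results in Refs.~\cite{schumacher2004reversible,Vogts2009DiscreteTQ} and explicitly omits the argument. Your sketch identifies the right architecture --- the quadrant geometry, the role of hypothesis~(\ref{Vogts hyp_sm}) as the pairwise commutation of the $2^s$ support algebras incident on a single super-cell, and simplicity of the $\cS^\mathfrak{c}_{\mathfrak q(\bm q)}$ as the crux --- but the dimension count you give for the upper bound does not close for $s\ge2$. You only pair $\cS^\mathfrak{c}_{\mathfrak q(\bm q)}$ with its antipode $\cS^{\mathfrak{c}+2\bm q}_{\mathfrak q(\bm q)}\cong\cS^\mathfrak{c}_{\mathfrak q(-\bm q)}$ inside $\cA_{\mathfrak q(\bm q)}\cong\cM_{d^{2^s}}$, obtaining $d(\mathfrak q(\bm q))\,d(\mathfrak q(-\bm q))\le d^{2^s}$; taking the product over the $2^{s-1}$ antipodal pairs yields only $\prod_{\bm q}d(\mathfrak q(\bm q))\le d^{2^{2s-1}}$, which matches the required $d^{2^s}$ only when $s=1$. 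So the inequality you write after ``taking the product over $\bm q$'' is simply false for $s\ge 2$.

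The repair is exactly the structure you set up in the first paragraph but then do not exploit: a fixed quadrant $\mathfrak q$ is shared by all $2^s$ super-cells of the first Margolus layer adjacent to it, not just two, and by translation their support algebras on $\mathfrak q$ are isomorphic to $\cS^\mathfrak{c}_{\mathfrak q(\bm q')}$ for \emph{all} $\bm q'\in\{\pm1\}^s$. Hypothesis~(\ref{Vogts hyp_sm}) makes these $2^s$ algebras pairwise commute in $\cA_\mathfrak q$; and because $\alpha^{-1}$ is a QCA with neighbourhood $-\cN$ (a fact from~\cite{schumacher2004reversible} you never invoke), one has $\cA_\mathfrak q\subseteq\alpha(\cA_{\mathfrak q-\cN})$, while $\mathfrak q-\cN$ is exactly the union of those $2^s$ super-cells, so item~5 of Lemma~\ref{lem:overlaplemma} shows the $2^s$ support algebras jointly \emph{generate} $\cA_\mathfrak q$. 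Pairwise commuting unital C*-subalgebras that generate a full matrix algebra are automatically simple and tensor to it; this gives simplicity and $\prod_{\bm q'}d(\mathfrak q(\bm q'))=d^{2^s}$ in one stroke, rather than having to ``grant'' simplicity and then run a separate (and, as written, insufficient) count. With this in place, your MFDQC dichotomy argument goes through essentially as you wrote it.
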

To establish a stronger result we focused on a particular subclass of these QCA.
\begin{definition}
Let $\mathscr{V}_p^s$ be the sets of von Neumann neighborhood QCA of a hypercubic lattice $\Z^s$ with $\cA_{\bm{x}} \cong \cM_p$, $p$ prime, for which Eq.~\eqref{Vogts hyp_sm} is satisfied.
\end{definition}
\noindent
This class of QCA is the natural extension of those of qubits.
Indeed, a sufficient condition \cite{Vogts2009DiscreteTQ} for commutation relations (\ref{Vogts hyp_sm}) to hold is
\begin{equation}\label{Vogts hyp strong}
       \left[ \tau^{-\bm{e_i}}\,\cS_{\bm{e_i}}^{\bm{0}},\, \tau^{-\bm{e_j}}\, \cS_{\bm{e_j}}^{\bm{0}} \right]=0\;\; \forall\,i,j\,. 
    \end{equation}
According to the classification provided by Lemma \ref{lem:confsuppalg}, our qubit QCA satisfy condition (\ref{Vogts hyp strong}) and thus are of type $\mathscr{V}^s_2$.\\ 
For the sets $\mathscr{V}_p^s$, it is profitable to define analogs of the one-dimensional index along Cartesian directions of the lattice. The components $i=1,\ldots,s$ of the \emph{index}
$\bm{\iota}(\alpha)$ of a QCA $\alpha \in \mathscr{V}_p^s $ are given by
\begin{equation}\label{index comp_sm}
        \iota_{i}(\alpha) \coloneqq \sqrt[\leftroot{-3}\uproot{3}2^{s-1}]{ \frac{ \prod_{\mathfrak q(\bm q) \in R_{i}} d\left(\mathfrak q(\bm q)\right)} { p^{2^{s-1}} } } \, \in \mathbb{Q}_+\,,
\end{equation}
where 
$R_{i} \coloneqq \left\{ \mathfrak q(\bm q) \in Q\mid q_i=-(\bm{e_i})_i \right\} $. In Heisenberg's picture, we recall that an evolution representing, e.g.,  a shift to the right by $\bm x$, shifts operators to the left by $-\bm x$. Hence, an index accounting for information flow to the right through some boundary addresses algebras to the left of that boundary (FIG.~\ref{fig:dimsupp}(A)).

\begin{lemma}\label{lem:lemmas}
    Let $\alpha \in \mathscr{V}_p^s\,$, so that
    \begin{equation}\label{isom nd}
        \alpha(\cA_{\mathfrak{c}} ) = \bigotimes_{\mathfrak q(\bm q) \in Q} \cS^{\mathfrak{c}}_{\mathfrak q(\bm q)} \cong \bigotimes_{\mathfrak q(\bm q)\in Q} \cM_{d(\mathfrak q(\bm q))}\,,
    \end{equation}
    with $ \prod_{\mathfrak q(\bm q)\in Q}d(\mathfrak q(\bm q))=p^{2^s} $ and $d(\mathfrak q(\bm q))=p^{n\left(\mathfrak q(\bm q)\right)}$. Then
     \begin{enumerate}
        \item $n\left(\mathfrak q(\bm q)\right) \in \{0, 1, 2\}\,$,
        \item $n\left(\mathfrak q(\bm q)\right)$ have a fixed parity $\forall\,\mathfrak q(\bm q) \in Q\,$,
\item[3.] $\bm{\iota}(\alpha)=\bm{1}$ iff $n\left(\mathfrak q(\bm q)\right)=1 \; \forall\, \mathfrak q(\bm q) \in Q\,$.
    \end{enumerate}
\end{lemma}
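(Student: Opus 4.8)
The plan is to combine Lemma~\ref{lem:Vogts}, which supplies the tensor-product skeleton, with a fine analysis of the single-cell support algebras $\cS^{\bm 0}_{\bm x}$, $\bm x\in\cN$, that pins down the exponents $n(\mathfrak q(\bm q))$. Since the von Neumann scheme sits inside the Moore scheme and (\ref{Vogts hyp_sm}) holds by assumption, Lemma~\ref{lem:Vogts} applies and yields $\alpha(\cA_{\mathfrak{c}})=\bigotimes_{\mathfrak q(\bm q)\in Q}\cS^{\mathfrak{c}}_{\mathfrak q(\bm q)}\cong\bigotimes_{\mathfrak q(\bm q)\in Q}\cM_{d(\mathfrak q(\bm q))}$ with $\prod_{\mathfrak q(\bm q)}d(\mathfrak q(\bm q))=d(\mathfrak{c})=p^{2^s}$. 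As $p$ is prime and the $d(\mathfrak q(\bm q))$ are positive integers, each equals $p^{n(\mathfrak q(\bm q))}$ for some $n(\mathfrak q(\bm q))\in\mathbb{N}$ with $\sum_{\mathfrak q(\bm q)}n(\mathfrak q(\bm q))=2^s$; this already establishes (\ref{isom nd}).

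Next I would localise each quadrant. Fixing $\mathfrak q(\bm q)$ and using reflection symmetry of $\cN$ it suffices to treat $\mathfrak q(\bm 1)$, $\bm 1=(1,\dots,1)$. Because $\cN$ is the von Neumann scheme, the only sites of $\mathfrak q(\bm 1)$ that meet $\mathfrak{c}+\cN$ are the central site $\bm 1$ and the $s$ outer sites $\bm 1+\bm e_i$; on all other sites the support algebra is trivial and is dropped by item~\textit{2.}\ of Lemma~\ref{lem:overlaplemma}. A direct check of which neighborhoods reach these sites, combined with items~\textit{4.}\ and~\textit{5.}\ of Lemma~\ref{lem:overlaplemma}, gives $\cS^{\mathfrak{c}}_{\bm 1+\bm e_i}\cong\cS^{\bm 0}_{\bm e_i}$ and $\cS^{\mathfrak{c}}_{\bm 1}\cong\cS^{\bm 0}_{\bm 0}\lor\bigvee_{i}\cS^{\bm 0}_{\bm e_i}\subseteq\cM_p$. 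Hence, by item~\textit{1.}\ of Lemma~\ref{lem:overlaplemma} and the reflection reduction, $\cS^{\mathfrak{c}}_{\mathfrak q(\bm q)}\subseteq\cM_p\otimes\bigotimes_{i=1}^s\cS^{\bm 0}_{q_i\bm e_i}$.

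Then I would invoke a single-cell dichotomy. Since $\alpha_0(\cA_{\bm 0})\cong\cM_p$ is simple, it cannot embed unitally into $\bigotimes_{\bm x\in\cN}\cS^{\bm 0}_{\bm x}$ if every $\cS^{\bm 0}_{\bm x}$ is a proper $*$-subalgebra of $\cM_p$ --- for then each has matrix blocks of size $<p$, so every Wedderburn block of the tensor product has size prime to $p$, whereas a unital copy of $\cM_p$ needs $p$ dividing every block size. Hence at least one $\cS^{\bm 0}_{\bm x}$, $\bm x\in\cN$, equals $\cM_p$. If this full algebra is an edge algebra $\cS^{\bm 0}_{\pm\bm e_k}$, then --- extending to prime $p$ the commutation analysis behind Lemma~\ref{lem:confsuppalg}, applied to the overlaps at $\bm x=2\bm e_k$ and $\bm x=\bm e_k\pm\bm e_j$ --- all other single-cell support algebras are forced trivial, so $\alpha_0$ is a one-site shift along $\pm\bm e_k$ up to a site-wise unitary; a direct computation then gives $n(\mathfrak q(\bm q))=2$ for the $2^{s-1}$ quadrants on one side of the shift and $n(\mathfrak q(\bm q))=0$ for the other $2^{s-1}$, so all $n(\mathfrak q(\bm q))\in\{0,2\}$ and $\iota_k(\alpha)=p^{\pm1}\ne1$ (a shift is not an FDQC~\cite{GNVW}). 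If instead the full algebra is $\cS^{\bm 0}_{\bm 0}$, then all edge algebras $\cS^{\bm 0}_{\pm\bm e_i}$ are proper, so each $\cS^{\mathfrak{c}}_{\mathfrak q(\bm q)}$ is a matrix algebra contained in $\cM_p\otimes\bigotimes_i\cS^{\bm 0}_{q_i\bm e_i}$, whose Wedderburn blocks have $p$-adic valuation exactly $1$; this forces $n(\mathfrak q(\bm q))\le1$ for every quadrant, and since $\sum_{\mathfrak q(\bm q)}n(\mathfrak q(\bm q))=2^s$ over $2^s$ quadrants, all $n(\mathfrak q(\bm q))=1$, whence $\prod_{\mathfrak q(\bm q)\in R_i}d(\mathfrak q(\bm q))=p^{2^{s-1}}$ and $\iota_i(\alpha)=1$ for every $i$ by (\ref{index comp_sm}). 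In both cases $n(\mathfrak q(\bm q))\in\{0,1,2\}$ with a common parity, and $\bm\iota(\alpha)=\bm1$ occurs precisely in the second case, i.e.\ precisely when all $n(\mathfrak q(\bm q))=1$.

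The main obstacle is the single-cell dichotomy for a general prime $p$: unlike $\cM_2$, the algebra $\cM_p$ carries nontrivial proper subalgebras (non-abelian once $p\ge5$), so the explicit case analysis underlying Lemma~\ref{lem:confsuppalg} cannot be reused verbatim and must be replaced by an argument relying only on invertibility of $\alpha$, the locality constraints (\ref{iff local rule}), and --- where those do not suffice --- the defining hypothesis (\ref{Vogts hyp_sm}), to conclude that a full edge support algebra forces $\alpha_0$ to be a shift. Granting this, the $p$-adic valuation count, the shift computation, and the index bookkeeping are all routine.
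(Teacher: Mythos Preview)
Your route differs substantially from the paper's. The paper never attempts to classify the single-cell support algebras $\cS^{\bm 0}_{\bm x}$ for general prime $p$; it works throughout at the quadrant level. It excludes $n(\mathfrak q)\ge3$ by counting which cells of $\mathfrak c$ can reach a given quadrant under the von Neumann scheme (and noting that $n=3$ would force two incompatible shift directions), obtains the fixed-parity statement by writing $\cS^{\mathfrak c}_{\mathfrak q(\bm q)}\otimes\cS^{\mathfrak c}_{\mathfrak q(\bm q')}$ for adjacent quadrants as a join of generating algebras that come in translation-isomorphic pairs, and for item~\textit{3.}\ rules out the residual ``diagonal'' pattern $(2,0,0,2)$ directly, showing it would force $\alpha_0$ to be trivial. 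No step requires deciding whether $\alpha_0$ is a shift. Your approach, by contrast, derives all three items simultaneously from a single-cell dichotomy plus a $p$-adic valuation bound; this is arguably more conceptual and buys a partial structure result for $\alpha_0$ itself, at the cost of the extra step you flag.

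That step is, however, far lighter than you make it sound. You need nothing like the case analysis of Lemma~\ref{lem:confsuppalg}; only the fact that $(\cM_p)'\cap\cM_p=\C I$. If $\cS^{\bm 0}_{\bm e_k}=\cM_p$, the overlap at $2\bm e_k$ gives $\cS^{\bm 0}_{-\bm e_k}=\cI$; then by item~\textit{2.}\ of Lemma~\ref{lem:overlaplemma}, $\alpha(\cA_{\bm e_k})$ acts trivially on site $\bm 0$, so the commutation $[\alpha(\cA_{\bm 0}),\alpha(\cA_{\bm e_k})]=0$ reduces to a commutation on the single site $\bm e_k$ between $\cS^{\bm 0}_{\bm e_k}=\cM_p$ and $\cS^{\bm e_k}_{\bm e_k}\cong\cS^{\bm 0}_{\bm 0}$, forcing $\cS^{\bm 0}_{\bm 0}=\cI$. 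The overlaps at $\bm e_k\pm\bm e_j$ then give $\cS^{\bm 0}_{\mp\bm e_j}=\cI$ by the same mechanism, so $\alpha_0$ is a shift and hypothesis~(\ref{Vogts hyp_sm}) is not even needed here. One small logical slip: your case split should be ``some edge algebra is full'' versus ``no edge algebra is full'' (whence, by your simplicity argument, the centre is full and all edges are proper). As written, ``if instead the full algebra is $\cS^{\bm 0}_{\bm 0}$, then all edge algebras are proper'' asserts an implication you have not established.
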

\noindent
See Figure~\ref{fig:dimsupp}(A).
\begin{proof}
For the sake of simplicity, we present the proof for $s=2$. By the von Neumann neighborhood scheme, the images of the algebras of any two sites can overlap only on a given plane of the lattice. The proof for $s>2$ requires going through a broader casuistry, but the arguments employed remain exactly the same.\\
    (\textit{1.}) 
    From Eq.~(\ref{isom nd}) follows $\sum_{\mathfrak q(\bm q) \in Q} n(\mathfrak q(\bm q))=4$ and $n(\mathfrak q(\bm q)) \in \{0,\ldots,4\}$. We show that $n(\mathfrak q(\bm q))$ cannot be equal to $4$ nor to $3$. For instance, let us consider $d\left(\mathfrak{q} (+1,+1)\right)=p^{n\left(\mathfrak q (+1,+1)\right)}$.  To have $n\left(\mathfrak q (+1,+1)\right)=4$  all the algebras $\cA_{\bm{x}} \cong \cM_{p}$ associated to the four sites belonging to the super-cell $\mathfrak{c} \ni \bm{x}$ should end up contributing to $\cS_{\mathfrak q (+1,+1)}^{\mathfrak{c}}$. 
    This is impossible since the neighborhood scheme prevents the algebra $\cA_{\bm{0}}$ from being mapped to any subalgebra of $\cA_{\mathfrak q (+1,+1)}$.
    Likewise, to have $n_j\left(\mathfrak q(+1,+1)\right)=3$ all the algebras $\cA_{\bm{e_1}}, \cA_{\bm{e_2}}$ and $ \cA_{\bm{e_1}+\bm{e_2}}$ should be mapped to $\cA_{ \mathfrak q(+1,+1)}$. The only way for this to happen, according to the neighborhood scheme, is for $\cA_{\bm{e_2}}$ to be shifted along $\bm{e_1}$ and for $\cA_{\bm{e_1}}$ to be shifted along $\bm{e_2}$ (modulo a site-wise unitary). However, for translation invariance, if one is shifted to one direction the same must happen to the other.\\
    (\textit{2.})  
    By item~\textit{5.} of Lemma \ref{lem:overlaplemma}, in terms of generating algebras we have (FIG.~\ref{fig:supp_gen})
    \begin{equation}\label{genn}
     \begin{aligned}
    \cS_{\mathfrak q(+1,+1)}^{\mathfrak{c}} \otimes &\cS_{\mathfrak q(+1,-1)}^{\mathfrak{c}} = \\ &\cS^{\bm{0}}_{\bm{e_1}} \lor \cS^{\bm{e_2}}_{\bm{e_1}+\bm{e_2}}   \lor  \\
    &  \cS^{\bm{e_1}}_{\{\bm{e_1},\, 2\bm{e_1},\, \bm{e_1}+\bm{e_2},\, \bm{e_1}-\bm{e_2}  \}} \lor \\
    &  \cS^{\bm{e_1}+\bm{e_2}}_{\{ \bm{e_1}+\bm{e_2},\, 2\bm{e_1}+\bm{e_2},\, \bm{e_1}+2\bm{e_2},\, \bm{e_1} \}}  \,.
\end{aligned}   
    \end{equation}

\begin{figure}[h]
    \centering
    \includegraphics[width=8.0cm]{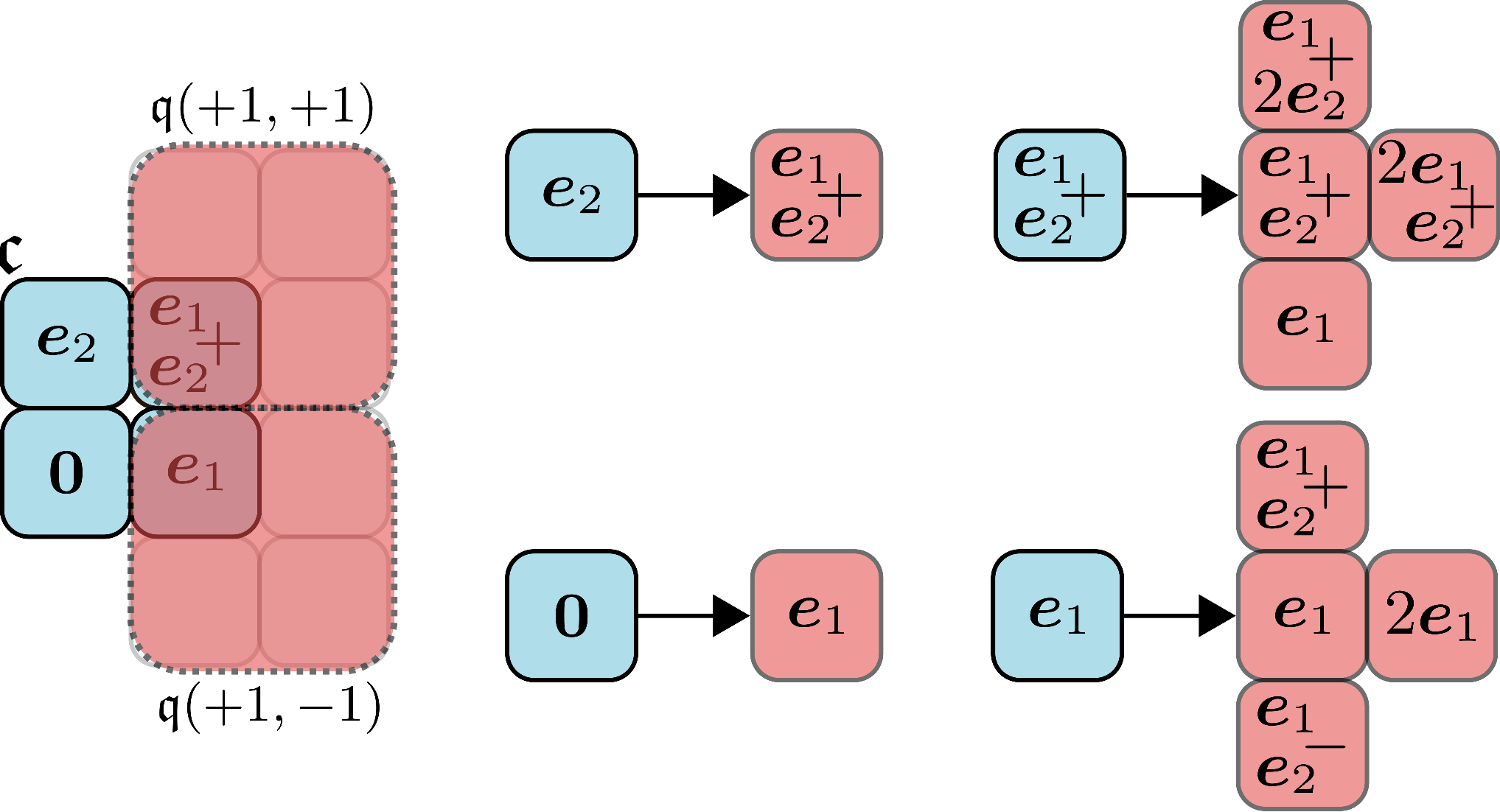}
    \caption{The Figure shows the contributions of $\alpha(\cA_{\mathfrak{c}})$ to the support algebra $ \cS_{\mathfrak q(+1,+1)}^{\mathfrak{c}} \otimes \cS_{\mathfrak q(+1,-1)}^{\mathfrak{c}}$ in terms of the related generating support algebras (\ref{genn}).}
    \label{fig:supp_gen}
\end{figure}
\noindent
By item~\textit{4.} of Lemma \ref{lem:overlaplemma} follows that the first two terms and the second two on the r.h.s.~of Eq.~(\ref{genn}) are respectively isomorphic. Thus the algebra on the l.h.s.~of Eq.~(\ref{genn}) receives the same contribution from the first two terms, as well as it receives the same contribution from the second two terms. This implies  $n\left(\mathfrak q(+1,+1)\right)+n\left(\mathfrak q(+1,-1)\right)=2m$ with $m\in \mathbb{N}_0$. Hence $n\left(\mathfrak q(+1,+1)\right)$ and $ n\left(\mathfrak q(+1,-1)\right)$ must be both even or odd integers. Analogously for $n\left(\mathfrak q(+1,+1)\right),n\left(\mathfrak q{(-1,+1)}\right)$ and for $n\left(\mathfrak q(+1,-1)\right),n\left(\mathfrak q{(-1,-1)}\right)$. It follows that  all the $n(\mathfrak q(\bm q))$ must be either even or odd integers.\\
(\textit{3.})
From previous items \textit{1.} and \textit{2.} of Lemma \ref{lem:lemmas} follows the well-posedness of Eq.~(\ref{index comp_sm}).\\
Now let $\iota_{1}(\alpha)=\iota_{2}(\alpha)=1$, then
\begin{align}
    &n\left(\mathfrak q(\bm q)\right)=1 \quad \forall\; \mathfrak q(\bm q) \in Q\,,  \label{ok}\\
    &\text{or}  \quad \begin{cases} \, n\left(\mathfrak q(q_i,q_j)\right)=n\left(\mathfrak q(-q_i,-q_j)\right)=2\\
            \, n\left(\mathfrak q(-q_i,q_j)\right)=n\left(\mathfrak q(q_i,-q_j)\right)=0 
            \end{cases}\,. \label{patologico}
\end{align}
We show that there is no QCA $\alpha \in \mathscr{V}_p^2$ such that Eq.s (\ref{patologico}) hold.
Suppose  
\begin{equation}\label{full 2}
 \begin{aligned}
    &\cS_{\mathfrak q(q_i,q_j)}^{\mathfrak{c}} \cong \cS_{\mathfrak q(-q_i,-q_j)}^{\mathfrak{c}} \cong \cM_{p^2}\,, \\
    &\cS_{\mathfrak q(-q_i,q_j)}^{\mathfrak{c}} \cong \cS_{\mathfrak q(q_i,-q_j)}^{\mathfrak{c}} \cong \cI\,.
\end{aligned}   
\end{equation}
 Without loss of generality let $q_i=q_j=+1$, then by item~\textit{5.} of Lemma~\ref{lem:overlaplemma} follows
\begin{equation}\label{algebra id}
    \cI \cong \cS_{\mathfrak q(-1,+1)}^{\mathfrak{c}}= \cS_{\bm{e_2}}^{\bm{e_1}+\bm{e_2}} \lor  \cS_{\bm{e_2}}^{\bm{0}} \lor  \cS_{\{\bm{e_2}, \, \bm{e_2}-\bm{e_1},\, 2\bm{e_2} \}}^{\bm{e_2}}\,.
\end{equation}
By definition, a QCA preserves trivial algebras. Since the support algebras on the r.h.s. of the Eq.~(\ref{algebra id}) generate a trivial algebra, they must be themselves trivial algebras. Hence by items~\textit{2.} and \textit{4.} of Lemma \ref{lem:overlaplemma} any operator $A \in \alpha_0(\cA_{\bm{0}})$ takes the form 
\begin{equation}\label{id left}
    A = I_{\{ \bm{0}, -\bm{e_1},\, \bm{e_2} \}} \otimes O_{\{ \bm{e_1},\, -\bm{e_2}  \}\,,}
\end{equation}
 with $O_{\{ \bm{e_1},\, -\bm{e_2}  \}} \in \cS_{\{ \bm{e_1},\, -\bm{e_2}  \}}^{\bm{0}}$. On the other hand, repeating the same argument for $\cS_{\mathfrak q(+1,-1)}^{\mathfrak{c}}$ yields 
 \begin{equation}\label{id right}
      A = I_{\{ \bm{0}, \bm{e_1},\, -\bm{e_2} \}} \otimes O_{\{ -\bm{e_1},\, \bm{e_2} \}\,,} 
 \end{equation}
with $O_{\{ -\bm{e_1},\, \bm{e_2}  \}} \in \cS_{\{ -\bm{e_1},\, \bm{e_2}  \}}^{\bm{0}}$. Therefore, combining Eq.s (\ref{id left}) and (\ref{id right}), $\alpha_0$ turns out to map any operator to the identity; consequently, it cannot give rise to (\ref{full 2}). Hence $\bm{\iota}(\alpha)=\bm{1}$ necessarily implies (\ref{ok}). The converse implication can be readily verified through direct computation of the index formula (\ref{index comp_sm}).
\end{proof}
\subsection{Proof of Theorem~2}
\begin{proof}
Following the construction of Ref.~\cite{GNVW}, we start by showing that if two QCA have the same index then they are equal modulo an FDQC.
     Let $\alpha_1, \alpha_2 \in \mathscr{V}_p^s$ and assume $\bm{\iota}(\alpha_1)=\bm{\iota}(\alpha_2)$. Let us refer with $~^J\!\cS$ to the support algebras of the first $(J=1)$ and second $(J=2)$ QCA respectively. Lemma~\ref{lem:lemmas} guarantees that
     \begin{gather*}
         \alpha_J(\cA_{\mathfrak{c}} ) = \bigotimes_{\mathfrak q(\bm q) \in Q} ~^J\!\cS^{\mathfrak{c}}_{\mathfrak q(\bm q)} \cong \bigotimes_{\mathfrak q(\bm q)\in Q} \cM_{d_J(\mathfrak q(\bm q))}\,,\\
         \cA_{\mathfrak{c}} = \bigotimes_{\mathfrak q(\bm q) \in Q} ~^J\!\cS_{\mathfrak{c}}^{\mathfrak q(\bm q)} \cong \bigotimes_{\mathfrak q(\bm q)\in Q} \cM_{d_J(-\mathfrak q(\bm q))}\,,
     \end{gather*} with
    $ \prod_{\mathfrak q(\bm q)\in Q}d_J(\mathfrak q(\bm q))=d(\mathfrak{c}) $.  Since the indices of $\alpha_1$ and $\alpha_2$ coincide, then $ ~^1\!\cS_{\mathfrak{c}}^{\mathfrak{q}(\bm{q})}\cong ~^2\!\cS_{\mathfrak{c}}^{\mathfrak{q}(\bm{q})}$ with $d_1(\mathfrak{q}(\bm{q}))=d_2(\mathfrak{q}(\bm{q}))$ for any $\mathfrak{q}(\bm{q})\in Q$. Thus, there exists a unitary operator $V \in \cA_{\mathfrak{c}}$ such that $V^\dagger O^1_{\mathfrak{c}} V =  O^2_{\mathfrak{c}} $ for any $O^1_{\mathfrak{c}}\in ~^1\!\cS_{\mathfrak{c}}^{\mathfrak{q}(\bm{q})} $, $ O^2_{\mathfrak{c}} \in ~^2\!\cS_{\mathfrak{c}}^{\mathfrak{q}(\bm{q})}$, and any $\mathfrak{q}(\bm{q})\in Q$. By translation invariance, we can reproduce the same argument and construct an analogous unitary map $V'$ on $\cA_{\mathfrak{q}(\bm{q})}$. Let $\{\Lambda_{\bm x}^1\}=\{\mathfrak{c}+2 \bm x\}_{\bm x \in \Z^s} $ and $\{\Lambda_{\bm x}^2\}=\{\mathfrak q(\bm q)+2\bm x\}_{\bm x \in \Z^s}$ be the Margolus partition of the lattice as per Definition~\ref{def:margolus}. We can take the unitary $V'$ to act in parallel on all the algebras defined by the elements of the partition $\{\Lambda_{\bm{x}}^2\}$, thus realizing a partitioned automorphism $\beta$ of the quasi-local algebra $\cA_{\Z^s}$. It follows that
    \begin{equation*}
        \cA_{\mathfrak c}\, {\overset{\alpha_1}{\cong}}\bigotimes_{\mathfrak q(\bm q) \in Q} ~^1\!\cS^{\mathfrak{c}}_{\mathfrak q(\bm q)} \overset{\beta}{\cong} \bigotimes_{\mathfrak q(\bm q) \in Q} ~^2\!\cS^{\mathfrak{c}}_{\mathfrak q(\bm q)}\, {\overset{\alpha_2^{-1}}{\cong}} \cA_{\mathfrak c}\,.
    \end{equation*}
    Therefore there exists a unitary operator $U \in \cA_{\mathfrak c}$ realizing the restriction of the automorphism $\alpha_2^{-1} \beta \alpha_1: \cA_{\Z^s} \to \cA_{\Z^s}$ to the algebra of the super-cell $\cA_{\mathfrak c}$. As done above, we can construct a partitioned automorphism $\gamma$ of $\cA_{\Z^s}$  by applying $U$ in parallel to the algebras identified by the partition $\{\Lambda_{\bm{x}}^1\}$. Ultimately, we obtaied a QCA $\gamma = \alpha_2^{-1} \beta \alpha_1 $, where $\gamma$ and $\beta$ are FDQCs. Hence
    \begin{equation}\label{marg_eqv}
        \alpha_2 = \mu\, \alpha_1
    \end{equation}
    with $\mu=\beta\, \widetilde{\gamma}^{-1} $ an FDQC, where $\widetilde{\gamma}^{-1}=\alpha_1\,\gamma^{-1}\,\alpha_1^{-1}$ is an FDQC possibly slightly deeper than $\gamma^{-1}$ (QCA form a group under composition, with FDQCs constituting a normal subgroup \cite{freedman2020classification}).

    According to  Lemma~\ref{lem:lemmas}, shift QCA $\tau^{\bm{x}} \in \mathscr{V}_p^s$ suffice to reproduce all the values of the index $\bm{\iota}$. From this observation and Eq.~(\ref{marg_eqv}) then it follows that for any $\alpha \in \mathscr{V}_p^s$ there exists a unique shift $\tau^{\bm{x}} \in \mathscr{V}_p^s$ (modulo FDQCs) such that 
    \begin{equation}\label{shift_equiv}
        \alpha = \mu\, \tau^{\bm{x}}\,,
    \end{equation}
with $\mu$ an FDQC. Therefore shifts constitute the representatives of equivalence classes of automata $\mathscr{V}_p^s$ modulo FDQCs. We establish the null shift (trivial evolution) $\tau^{\bm{0}}=\text{id}$ to represent the equivalence class of FDQCs in $ \mathscr{V}_p^s$ (no ancillary system is required). Indeed, $\bm \iota (\text{id})=\bm{1}$ implies that whether  $\bm{\iota}(\alpha)=\bm{1}$ then $\alpha$ is an MFDQC. Viceversa, since shifting an algebra along a given direction is essentially a one-dimensional shift automaton -- that we know is not an FDQC \cite{GNVW} -- if  $\alpha \in \mathscr{V}_p^s$ is an FDQC then, in compliance with Eq.~(\ref{shift_equiv}), it must be equivalent to a null shift, whose index is $\bm{1}$. 
\end{proof}
\noindent
The hypothesis of von Neumann neighborhood allows for precise determination of the dimensions $d(\mathfrak q(\bm q))$ of the support algebras on the quadrants (Figure~\ref{fig:dimsupp}(A)). Moreover, the hypothesis of cells of prime dimension avoids the possible ambiguity illustrated in Figure~\ref{fig:dimsupp}(B). For instance, considering two QCA $\alpha_1,\alpha_2 \in \mathscr{V}_{p^2}^2$ on $\Z^2$ with two qudits of prime dimension per site $\cA_{\bm{x}} \cong \cM_{p^2}$, they may give rise to non-isomorphic quadrants support algebras despite both having the same index $\bm{\iota}(\alpha_1)=\bm{\iota}(\alpha_2)=\bm{1}$. Although we are unable to explicitly construct an automaton $\alpha_2$ that realizes support algebras as in the right panel of Figure~\ref{fig:dimsupp}(B), we cannot exclude its existence. In other words, when the dimension of cells is nonprime, items~\textit{1.} and~\textit{3.} of Lemma~\ref{lem:lemmas} fail, potentially allowing for a broader range of dynamics.
\begin{figure}[h]
    \centering
    \includegraphics[width=8.0cm]{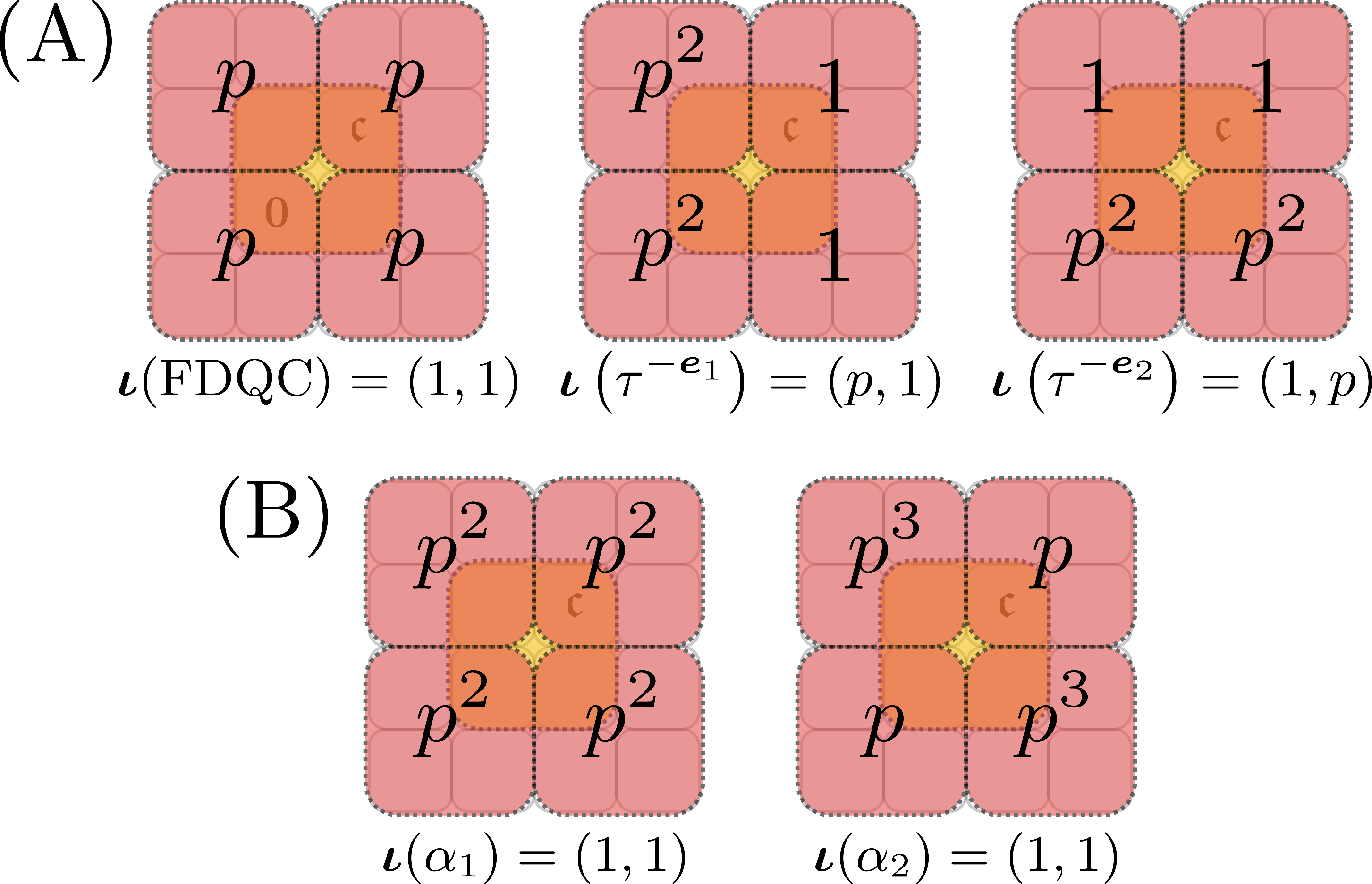}
    \caption{Dimensions of the support algebras of $\alpha(\cA_{\mathfrak{c}})$ on the quadrants $\mathfrak q(\bm q)$:  $\cS^{\mathfrak{c}}_{\mathfrak q(\bm q)} \cong \cM_{d(\mathfrak q(\bm q))}$ where $d(\mathfrak q(\bm q))=p^{n\left(\mathfrak q(\bm q)\right)}$. Panels (A) display the possibilities for QCA belonging to $\mathscr{V}_p^2$ and related indices. Panels (B) show an ambiguousness arising when considering $\mathscr{V}_d^s$ with $d$ nonprime.}
    \label{fig:dimsupp}
\end{figure}

\end{document}